\documentclass[11 pt]{article}
\usepackage{graphicx} 

\usepackage[square,numbers]{natbib}

\title{Verifying global identifiability   of parametric linear ODE models is NP-hard\footnote{This research was partially supported by the NSF grants CCF-2212460 and DMS-1853650.}}
\author{Alexey Ovchinnikov and Pedro Soto}
\date{}

\usepackage{times,pslatex}
\usepackage{color,bm,xcolor}
\usepackage{amsthm}
\usepackage{amsfonts, comment}
\usepackage{amssymb}
\usepackage{amsmath}
\usepackage[colorlinks, linkcolor=reference,
 citecolor=citation,
         urlcolor=e-mail]
      {hyperref}
\definecolor{conditional}{rgb}{0,1,0}
\definecolor{e-mail}{rgb}{0,.40,.80}
\definecolor{reference}{rgb}{.20,.60,.22}
\definecolor{mrnumber}{rgb}{.80,.40,0}
\definecolor{citation}{rgb}{.20,.60,.22}

\DeclareMathOperator{\Char}{char}
\DeclareMathOperator{\Wr}{Wr}

\newtheorem{theorem}{Theorem}
\newtheorem{corollary}{Corollary}

\newtheorem{lemma}{Lemma}
\theoremstyle{definition}
\newtheorem{definition}{Definition}
\theoremstyle{remark}
\newtheorem{example}{Example}
\newtheorem{remark}{Remark}

\begin{document}

\maketitle

\begin{abstract}
Global parameter identifiability is a property of a parametric ODE model to recover the parameter values uniquely from the input-output data. Not all parametric ODE models have this property, and checking for parameter identifiability is a prerequisite to perform numerical parameter estimation. There are many algorithms and software packages for global parameter identifiability, and frequently the runtime is large. However, the computational complexity for this problem has not been analyzed yet, though there are complexity results for local (finitely many values fit the data) parameter identifiability. In this paper, we estimate the complexity of checking global parameter identifiability over real fields for ODE models that depend linearly on the state variables and rationally on the parameters. In particular, we prove that it is equivalent to the injectivity problem.
\end{abstract}

\section{Introduction}
Parametric ODEs are widely used in modeling, and estimating the numerical parameter values from measured data is often needed because parameters frequently have a useful physical meaning. However, if multiple (finitely or infinitely many) values of the parameters fit the same data, it is no longer possible to solve the estimation problem without additional insights. Global parameter identifiability is a property of an ODE model to recover the parameters uniquely from the data, and there are many ODE models used in practice with non-identifiabile (infinitely many parameter values fit the data) or only locally identifiable (finitely but more than one parameter values fit the data) parameters.

The global parameter identifiability problem has been studied for several decades, and multiple algorithms and software packages have been developed, including, e.g., DAISY~\cite{DAISY}, SIAN~\cite{SIAN}, 
COMBOS~\cite{COMBOS},  Structural Identifiability Toolbox~\cite{Ilmer2021}, and StructuralIdentifiability.jl~\cite{structidjl}. There has been effort to improve the running time and memory use~\cite{HOPY2020,structidjl,trbasis,weights} for global identifiablity checking. However, computational complexity results for global identifiability are still missing. The existing results are limited to the complexity of local identifiability~\cite{Sedoglavic} and ODE models over complex numbers.

In this paper, we study the computational complexity of the parameter identifiability problem for linear ODE systems that depend rationally on its parameters, over a more general class of fields than just the field of complex numbers: fields of characteristic zero and finite fields. This includes the field of real numbers, which is crucial in applications, but was largely avoided in the existing theoretical analysis for global identifiability. In particular, we prove that it is NP-hard by showing that the injectivity problem (known to be NP-hard) is polynomial-time Karp reducible to global identifiability. Conversely, we show that global identifiability is polynomial-time Karp reducible to an unexpanded form of the injectivity problem, which we prove lies in co-$\mathrm{NP}_R$ in the Blum–Shub–Smale model. Together these give that, in the algebraic complexity model, both global identifiability and IO-identifiability are NP-hard and lie in co-$\mathrm{NP}_R$ (see Theorem~\ref{thm:main} and Corollary~\ref{cor:final} for more details).

The paper is organized as follows. We introduce the necessary terminology from differential algebra and indentifiability  in Section~\ref{sec:defs}. In Section~\ref{sec:main}, we state our main complexity result, which we then prove in Sections~\ref{sec:lower} and~\ref{sec:upper}.

\section{Definitions}\label{sec:defs}
We consider linear ODE models of the form
\begin{equation}\label{eq:sys}
\Sigma = 
\begin{cases}
\bm x' = \bm f(\bm x,\bm\alpha)\\
\bm y = \bm g(\bm x, \bm\alpha),
\end{cases}
\end{equation}
where $\bm f$ and $\bm g$ are vectors of 
functions over a field $\mathbb{F}$ (of any characteristic) that are rational in the parameters $\bm \alpha = (\alpha_1,\ldots,\alpha_s)$ and in the state variables $\bm x = (x_1,\ldots,x_n)$, with output functions $\bm y =(y_1,\ldots,y_m)$. 
We now introduce basic definitions from commutative and differential algebra that we use in the paper:
\begin{enumerate}
  \item A {\em differential ring} $(R,')$ is a commutative ring with a derivation $':R\to R$, that is, a map such that, for all $a,b\in R$, $(a+b)' = a' + b'$ and $(ab)' = a' b + a b'$. 
  \item The {\em ring of differential polynomials} in the variables $x_1,\ldots,x_n$ over a differential field $\mathbb{F}$ is the ring $\mathbb{F}\left[x_j^{(i)}\mid i\geqslant 0,\, 1\leqslant j\leqslant n\right]$ with a derivation defined on the ring by ${\left(x_j^{(i)}\right)}' := x_j^{(i + 1)}$. 
  This differential ring is denoted by $\mathbb{F}\{x_1,\ldots,x_n\}$.
  \item An ideal $I$ of a differential ring $(R,\delta)$ is called a {\em differential ideal} if, for all $a \in I$, $\delta(a)\in I$. For $F\subset R$, the smallest differential ideal containing the set $F$ is denoted by $[F]$.

    \item An ideal $P$ in a commutative ring $R$ is said to be {\em prime} if, for all $a, b \in R$, $a\cdot b \in P$ implies $a\in P$ or $b \in P$.
\item We say that a differential field $(\mathbb{F}, ')$ is a field of constants if, for all $a \in \mathbb{F}$, $a' = 0$.
    \item Let $(\mathbb{F}, ')$ be a field of constants. Given $\Sigma$ as in~\eqref{eq:sys}, we define the differential ideal of $\Sigma$ as $I_\Sigma=[\bm{x}'-\bm{f},\bm{y}-\bm{g}] \subset \mathbb{F}(\bm{\alpha})\{\bm{x},\bm{y}\}$. Following the proof of  \cite[Lemma~3.2]{HOPY2020}, $I_\Sigma$ is a prime differential ideal as the results from polynomial ideals used in the proof of \cite[Lemma~3.2]{HOPY2020} did not use characteristic zero.
    \item  The image of $(\bm x,\bm y)$ under the canonical homomorphism $\mathbb{F}(\bm{\alpha})\{\bm{x},\bm{y}\} \to \mathbb{F}(\bm{\alpha})\{\bm{x},\bm{y}\}/I_\Sigma$ is called the {\em generic solution} of $\Sigma$.
    \item  For differential fields $\mathbb{F} \subset \mathbb{K}$ and elements $a_1,\ldots,a_n$, the smallest differential subfield of $\mathbb{K}$ that contains $\mathbb{F}$ and $a_1,\ldots,a_n$ is denoted by $\mathbb{F}\langle a_1,\ldots,a_n\rangle$.

    \item A field $\mathbb{F}$ is called {\em formally real}, if there is no way to express $-1$ as a sum of squares, \emph{i.e.,} $a_1^2 + ... + a_n ^2 \neq 0$, for all $a_1,...,a_n \in \mathbb{F} \setminus \{0\}$. Any such field must contain the rationals $\mathbb{Q}$ since $1$ is a square and thus $1+...+1 = n \neq 0$ for all $n \in \mathbb{N}_{\geq 0}$ by induction.
    \item For a field $\mathbb{F}$ of characteristic $p$, let $\mathbb{F}^{p^{-\infty}}$ denote the {\em perfect closure} of $\mathbb{F}$. Note that, if $\Char \mathbb{F} =0$ or $\mathbb{F}$ is finite, then \[\mathbb{F}^{p^{-\infty}} = \mathbb{F}.\] If $\Char \mathbb{F} = p \ne 0$ and $\alpha$ is transcendental over $\mathbb{F}$, then \[\mathbb{F}(\alpha)^{p^{-\infty}} =\mathbb{F}^{p^{-\infty}}\left(\alpha,\alpha^{1/p}, \alpha^{1/p^2},\ldots\right) .\]
\end{enumerate}

\begin{definition}[{Input-output (IO) identifiability, cf.\cite[Definition~2]{ident-compare}}]
\label{item:eight} Let $p =\Char\mathbb{F}$ and $k$  the smallest subfield such that $\mathbb{F} \subset k\subset \mathbb{F}(\bm\alpha)$ and $I_\Sigma \cap \mathbb{F}(\bm\alpha)\{\bm y\}$ is generated as an ideal by $I_\Sigma \cap k\{\bm y\}$. A parameter $\alpha$ is said to be {\em IO-identifiable} over a subset $D \subset \mathbb{F}$ if there exists a rational function $h : \mathbb{F}\to \mathbb{F}$ such that $h(\alpha) \in k^{p^{-\infty}}$ and $h$ is injective on $D$. 

The tuple of parameters $\bm\alpha$ is said to be {\em IO-identifiable} on a tuple of subsets $\bm D$ if, for all $i$, $\alpha_i$ is identifiable on $D_i$.
\end{definition}

\begin{remark} The name IO-identifiability in Definition~\ref{item:eight} comes from the fact that the field $k$ in the definition is generated by the coefficients of so-called IO-equations, which can be computed using differential algebra algorithms, see~\cite[Section~5]{ident-compare}. This provides a convenient way of checking IO-identifiability.
\end{remark}

\begin{definition}[Identifiability, {cf. \cite[Sections~2 and 3]{GR2021}}]\label{def:id}
 A parameter $\alpha$ is said to be {\em identifiable}  in system~\eqref{eq:sys} over a subset $D \subset\mathbb{F}$ if, for every generic solution $(\bm x^\ast, \bm y^\ast)$ of~\eqref{eq:sys}, there exists $h \in\mathbb{F}\langle \bm y^\ast\rangle^{p^{-\infty}}\cap \mathbb{F}(\alpha)$ such that the rational map $h : \mathbb{F} \to \mathbb{F}, \alpha \mapsto h(\alpha),$ is injective on $D$.

 The tuple of parameters $\bm\alpha$ is said to be {\em identifiable} on a tuple of subsets $\bm D$ if, for all $i$, $\alpha_i$ is identifiable on $D_i$.
\end{definition}

\begin{remark} If $\mathbb{F}=\mathbb{C}$ in Definition~\ref{def:id} and $D = \mathbb{C}$, then the definition becomes equivalent to the standard definition of identifiability (see e.g.~\cite[Definition~7]{allident}) because one can show that $h \in \mathbb{C}(\alpha)$ is injective as a map $\mathbb{C} \to \mathbb{C}$ if and only if $\alpha \in \mathbb{C}(h)$. 
\end{remark}

We will now give two clarifying examples if $\mathbb{F} = \mathbb{R}$, which is the most frequent case occuring in the applications.

\begin{example}
We will see that, over $\mathbb{R}$, an identifiable parameter $\alpha$ does not have to belong to the field $\mathbb{R}\langle y^\ast\rangle$, unlike the case of $\mathbb{C}$ as in \cite[Definition~7]{allident}. For this, consider the ODE model
\begin{equation}\label{eq:alphacubed}
\begin{cases}
x' = \alpha^3x,\\
y = x.
\end{cases}
\end{equation}
Then, for any generic solution $(x^\ast,y^\ast)$ of~\eqref{eq:alphacubed}, \[h := \alpha^3 = \frac{{y^\ast}'}{y^\ast} \in \mathbb{R}\langle y^\ast\rangle.\] Since the map $\alpha \mapsto \alpha^3$ is injective on $\mathbb{R}$, the parameter $\alpha$ is identifiable according to Definition~\ref{def:id}. However, $\alpha \notin \mathbb{R}\langle y^\ast\rangle$. Moreover, $\alpha \notin \mathbb{C}\langle y^\ast\rangle$, and so $\alpha$ is not identifiable over $\mathbb{C}$.
\end{example}

\begin{example} The notion of identifiability over $\mathbb{R}$ has a user-specified choice of a domain $D$ to take into account more possible situations. To illustrate this, consider the ODE model
\[
\begin{cases}
x' = (\alpha^3 -\alpha)x,\\
y=x
\end{cases}
\]
and the following intervals \[D_1 = (0,1),\ \ D_2 = (1,2),\ \ D_3 = (-\infty,+\infty).\] We have that \[\mathbb{R}\langle y^\ast\rangle\cap \mathbb{R}(\alpha) = \mathbb{R}(\alpha^3-\alpha).\] Since the function $\alpha^3-\alpha$ is not injective on $(0,1)$ or $(-\infty,+\infty)$, no $h\in\mathbb \mathbb{R}(\alpha^3-\alpha)$ is injective there, and so $\alpha$ is not identifiable on $D_1$ or $D_3$. On the other hand, $\alpha^3-\alpha$ is injective on $(1,2)$, and so $\alpha$ is identifiable on $D_2$.
\end{example}

The following toy example shows why we take perfect closures in Definitions~\ref{item:eight} and~\ref{def:id} if $\Char{\mathbb{F}} = p > 0$.
\begin{example} Consider the model
\[
\begin{cases}
x' = \alpha^2x\\
y=x
\end{cases}
\]
with $\mathbb{F} = \mathbb{Z}_2$. Following Definition~\ref{item:eight}, we find the ideal \[I_\Sigma \cap\mathbb{Z}_2(\alpha)\{y\} = [y' - \alpha^2y],\] and so $k = \mathbb{Z}_2(\alpha^2)$. Since the map $\alpha \to \alpha^2$ is injective in $\mathbb{Z}_2$, it makes sense to call $\alpha$ IO-identifiable.  Similarly, considering Definition~\ref{def:id}, we see that 
\[
\alpha^2 = \cfrac{{y^\ast}'}{y^\ast} \in \mathbb{Z}_2\langle y^\ast\rangle\quad\text{and}\quad \alpha \notin \mathbb{Z}_2\langle y^\ast\rangle.
\] 
However, for all values $\alpha^\ast \in \mathbb{Z}_2$, we have $\alpha^\ast = {(\alpha^\ast)}^2$. Hence, in $\mathbb{Z}_2$, $\alpha^\ast$ is (trivially) uniquely determined by ${(\alpha^\ast)}^2$. This is consistent with taking the perfect closure as in Definition~\ref{def:id}:
\[
\mathbb{Z}_2\langle y^\ast\rangle^{p^{-\infty}}\cap \mathbb{Z}_2(\alpha)=\mathbb{Z}_2(\alpha),
\]
and so we say that $\alpha$ is globally identifiable.
\end{example}

\begin{definition}(The Class $\mathbf{NP}$, {cf. \cite[Definition 2.1]{arora2009computational}}) 
A language $L \subseteq\{0,1\}^*$ is in $\mathbf{NP}$ if there exists a polynomial $p: \mathbb{N} \rightarrow \mathbb{N}$ and a polynomial-time Turing machine $M$ such that, for every $x \in\{0,1\}^*$,
\[
x \in L \iff \exists u \in\{0,1\}^{p(|x|)} \text { s.t. } M(x, u)=1
\]

If $x \in L$ and $u \in\{0,1\}^{p(|x|)}$ satisfy $M(x, u)=1$ then we call $u$ a {\em certificate} (or {\em witness}) for $x$ (with respect to the language $L$ and machine $M$ ).
\end{definition}

\begin{definition}(Polynomial-time reducible, {cf. \cite[Definition 2.7]{arora2009computational}})
\begin{itemize}
\item 
We say that a language $A \subseteq\{0,1\}^*$ is \textit{polynomial-time Karp reducible} (abbreviated as \textit{polynomial-time reducible}) to a language $B \subseteq\{0,1\}^*$, denoted by $A \leq_p B$, if there is a polynomial-time computable function $f:\{0,1\}^* \rightarrow\{0,1\}^*$ such that for every $x \in\{0,1\}^*$,
\[ x \in A \iff f(x) \in B.\]
\item 
We say that $B$ is $\mathbf{N P}$-\textit{hard} if $A \leq_p B$ for every $A \in \mathbf{N P}$. 
\item We say that $B$ is $\mathbf{N P}$-\textit{complete} if $B$ is $\mathbf{N P}$-hard and $B \in \mathbf{N P}$.
\end{itemize}
\end{definition}

\begin{definition}\label{def:5}(Algebraic $\mathbf{NP}_R$ or the BSS-model, {cf. \cite[Chapter 5, Definition 1]{blum1998complexity}})
Given a ring, $R$, we give an informal characterization of $\mathbf{P}_R$, $\mathbf{NP}_R$, $\mathbf{N P}_R$-hard, etc. 
The definitions remain almost the same as before with the only difference being: 
\begin{itemize}
    \item We consider more general languages of the form $L \subset R^*$. 
    \item Our machines are further allowed to perform basic arithmetic operations, \emph{e.g., }$+,*$, at unit cost. 
\end{itemize}
\end{definition}
\begin{remark}
    The standard Turing complexity is given by setting $R:= \mathbb{F}_2$ and is, therefore, the most natural algebraic generalization of Turing complexity.
    However, there are many other algebraic computation models, see Chapter 16 of \cite{arora2009computational}.
Our definition of the BSS model is closer to the one in \cite{arora2009computational} than the one in \cite{blum1998complexity} which models a computation as a path in a finite directed graph. 
In practice, counting the number of arithmetic operations suffices to bound the algebraic complexity (from above). 
\emph{However, one must be very careful not to conflate uniform and non-uniform models of computation; in particular, in the case of $R:= \mathbb{F}_2$, it is paradoxically possible that $\mathbf{NP} \neq \mathbf{P} $ and, simultaneously, that there exists efficient polynomial size Boolean circuits (\emph{i.e.,} straightline programs over $R:= \mathbb{F}_2$) that solve every problem in $\mathbf{NP}  $!} 
The confusion disappears once one contemplates the following two facts 1) there are boolean circuits that solve the halting problem (therefore, boolean circuits are far more powerful than Turing machines) 2) simply because there exists an efficient circuit does not mean we have an efficient means by which to construct the circuits themselves. 
\end{remark}

\begin{definition}(Definable, {cf. \cite[Definition 1.3.1]{Marker_book}})
    Given a ring $R$ (respectively an ordered ring $R$), we say that $D$ is $R$-definable (resp. $R$-order-definable) if there is a formula $\varphi(\bm x, \bm y)$ in the language of rings (resp. language of ordered rings) and a tuple $\bm b \in R^ k$ such that 
    \begin{equation*}
        D = \{ \bm x \in R^n \mid \varphi(\bm x, \bm b) \}.
    \end{equation*}
\end{definition}

\begin{definition}\label{def:poly_def}
    We say that $D$ is {\em polynomial-time definable} if there exists a polynomial-time machine $M$ such that 
    $$
x \in D \iff M(x,D) = 1.
    $$
    In particular, 
    we have $L_\text{poly\_def} \in \mathbf{P}_R$.
    
\end{definition}

\begin{remark}
    To make the proceeding language, Definition~\ref{def:poly_def}, well-defined, we replace $D$ with its code, $|D|$, given some G\"odel-coding, $|\_|$, of the well formed formula over the language of rings. See   \cite[1.10 and 7.8]{10.5555/28907}.
\end{remark}

\section{Main result}\label{sec:main}
Let $\mathbb{F}$ be a field (any characteristic).
We define the language of injective mappings as
\[
L_\mathrm{inj} : = \{ (\bm g, D)  \mid  g_i \in \mathbb{F}[\alpha_1,...,\alpha_s],  \bm{g}: D \rightarrow \mathbb{F}^n \text{ is injective, } D \in L_\text{poly\_def}\} ,
\]
and let the language of globally identifiable/IO-identifiable ODE systems be defined as 
\begin{gather*}
L_\mathrm{gid} : = \{ ( \Sigma, D ) \mid  \text{the parameters $\bm\alpha$ are identifiable on } \bm D\}\\
L_{\mathrm{gid},1} : = \{  (\Sigma ,D ) \in L_\mathrm{gid}  \mid  \text{ with $m=1$}  \}\\
L_\mathrm{gioid} : = \{  (\Sigma,D ) \mid  \text{the parameters $\bm\alpha$ are IO-identifiable on } \bm D\}
\end{gather*}
Finally let the knapsack language be defined as 
\begin{equation}\label{eq:3}
    L_\mathrm{knap} := 
    \left \{ S = \{s_1,...,s_{n+1}\} \subset \mathbb{Z} \mid  \sum_{i \in [n]}s_i = s_{n+1} \right \}.
    \end{equation}
    \begin{remark}
The language $L_\mathrm{inj\_det}$ appearing in Theorem~\ref{thm:main} refers to the case in which each component $g_i$ of the input map is presented as a single polynomial in expanded form (a list of coefficients indexed by monomial degrees). The unexpanded language $L_\mathrm{inj\_unexp}$ will be introduced in Section~\ref{sec:upper} and generalizes this by allowing each component to be presented recursively as a composition of polynomial maps. Lemma~\ref{lem:not_too_hard} will show that $L_\mathrm{inj\_unexp} \in \mathbf{co\text{-}NP}_R$; since $L_\mathrm{inj\_det}$ is the depth-one case of $L_\mathrm{inj\_unexp}$, it follows a fortiori that $L_\mathrm{inj\_det} \in \mathbf{co\text{-}NP}_R$. This is not vacuous because the unexpanded representation can be exponentially more compact than the expanded one (Example~\ref{ex:full} gives an $O(n)$-size unexpanded input whose expansion has $n!$ terms), yet the $\mathbf{co\text{-}NP}_R$ bound applies uniformly to both presentations. Finally, if $\mathbb{F}$ is algebraically closed, the proof of Lemma~\ref{thm:gid_hard} goes through unchanged (the perfect closure is trivial and the Galois-theoretic argument simplifies), so Theorem~\ref{thm:main} holds in this setting as well; the case $\mathbb{F} = \mathbb{C}$ of \cite[Definition 7]{allident} is recovered.
    \end{remark}
\begin{theorem}[Main Result]\label{thm:main}
We have
\[
L_\mathrm{inj}  \leq_p L_{\mathrm{gid},1}\leq_p L_\mathrm{inj\_det}\quad\text{and}\quad L_\mathrm{inj}  \leq_p L_\mathrm{gioid}
\leq_p 
L_\mathrm{inj\_det}. \]
\end{theorem}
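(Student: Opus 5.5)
The plan is to prove the four reductions separately. The lower bounds $L_\mathrm{inj}\leq_p L_{\mathrm{gid},1}$ and $L_\mathrm{inj}\leq_p L_\mathrm{gioid}$ come from one explicit construction. The upper bounds to $L_\mathrm{inj\_det}$ come from computing the IO-equations, or equivalently the field of identifiable functions, of a linear system, which is expressed by polynomials in $\bm\alpha$.

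\emph{Lower bound.} Let $(\bm g,D)$ be an instance of $L_\mathrm{inj}$ with $\bm g=(g_1,\dots,g_n)$. I will build a linear system with a single output whose IO-equation has coefficients that generate exactly $\mathbb{F}(g_1,\dots,g_n)$. A natural candidate is a companion-type system: take $x_1'=x_2,\dots,x_{n-1}'=x_n$ and $x_n'=\sum_i c_i x_i$, with $y=x_1$. The $c_i$ are chosen as polynomials in the $g_j$ so that the IO-equation is
\[
y^{(n)}-\sum_i c_i y^{(i-1)}=0 .
\]
One must ensure that the $c_i$ generate the same field as the $g_i$. The simplest choice is $c_i=g_i$. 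To avoid degenerate constant coefficients, one could instead shift to $c_i=g_i+t_i$ with fresh constants, or use distinct eigenvalue constructions. The minimality of this IO-equation follows because the companion system is observable: $y,y',\dots,y^{(n-1)}$ are algebraically independent over $\mathbb{F}(\bm\alpha)$.

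Then $k=\mathbb{F}(\bm g)$. Moreover, $\mathbb{F}\langle y^\ast\rangle\cap\mathbb{F}(\bm\alpha)=\mathbb{F}(\bm g)$, because every differential rational function of $y^\ast$ that is constant over the field must come from the coefficients of the minimal IO-equation. This is the standard argument (cf.\ \cite{ident-compare}), and it can be adapted to perfect closures.

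It remains to show that $\bm\alpha$ is (IO-)identifiable on $\bm D$ if and only if $\bm g$ is injective on $D$. Here lies a subtlety: the definitions ask for injectivity of each $\alpha_i$ separately, through a single rational $h\in\mathbb{F}(g)\cap\mathbb{F}(\alpha_i)$. The step I expect to be the main obstacle is reconciling "$\bm g$ injective on $D$" with "for every $i$ there is a rational $h_i(\alpha_i)$ in the (perfect closure of the) field generated by $\bm g$ that is injective on $D_i$."

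To handle this, I would first try to modify the reduction so that the identifiability question concerns a single parameter. For example, add a fresh parameter $\beta$ and encode $\bm g$ so that $\beta$ is identifiable exactly when the fibers of $\bm g$ over $D$ are singletons. Otherwise, I would use a Galois/field-theoretic argument: over real or finite fields, injectivity of a rational map on a definable set $D$ corresponds to the fiber being a single point. That argument accounts for the perfect closure in characteristic $p$, where Frobenius is injective.

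All of this is computable in polynomial time, since the system has size linear in the description of $\bm g$.

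\emph{Upper bound.} Given $(\Sigma,D)$ with $\Sigma$ linear in $\bm x$, write $\bm x'=A(\bm\alpha)\bm x$ and $y=C(\bm\alpha)\bm x$, clearing denominators. I will use the Cayley--Hamilton theorem and the observability matrix to compute the IO-equation:
\begin{itemize}
\item the coefficients are polynomial expressions, namely the characteristic polynomial coefficients of $A$ and minors of the observability matrix $\mathcal{O}=(CA^j)_j$;
\item for $m=1$, the IO-equation is the characteristic polynomial of $A$ restricted to the observable subspace, normalized.
\end{itemize}
These coefficients, $q_1,\dots,q_r$, generate $k$. Identifiability of $\bm\alpha$ on $\bm D$ is then equivalent to injectivity of $\bm q$ on $\bm D$. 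This reuses the same field equivalence as in the lower bound, together with $\mathbb{F}\langle y^\ast\rangle\cap\mathbb{F}(\bm\alpha)=k$ and the fact that for $m=1$ the identifiable field equals $k$; for $L_\mathrm{gioid}$ this holds for any $m$ by definition.

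To produce an $L_\mathrm{inj\_det}$ instance in polynomial time, I would output the $q_j$ in determinantal form rather than fully expanded. This is what the name $L_\mathrm{inj\_det}$ suggests: the coefficients are given as determinants or Cayley--Hamilton expressions of matrices with polynomial entries, of size polynomial in $n$.

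Clearing denominators, which turns rational functions of $\bm\alpha$ into polynomials, will be handled by introducing auxiliary coordinates $u$ with $u\cdot\mathrm{den}=1$ and intersecting $D$ accordingly. This keeps $D$ polynomial-time definable.
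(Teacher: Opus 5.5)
The construction is the same as the paper's. You use the companion system $\Sigma_{\bm g}$ with $y=x_1$, and you identify the relevant field with $\mathbb{F}(\bm g)$ up to perfect closure. The paper gets $g_i\in\mathbb{F}\langle y^\ast\rangle$ because the Wronskian is not in $I_\Sigma$. It gets the reverse inclusion by applying an automorphism of $\overline{\mathbb{F}(\bm\alpha)}$ over the perfect closure of $\mathbb{F}(\bm g)$ to $\bm h_{2,j}H_j-\bm h_{1,j}\in I_\Sigma$ and using primality. Your upper bound also matches: the coefficient map of the IO-equations, kept unexpanded. The paper uses symbolic Gaussian elimination rather than Cayley--Hamilton, and your non-expanded reading of $L_\mathrm{inj\_det}$ is what Lemma~\ref{lem:upper} actually needs (cf.\ Example~\ref{ex:full}).

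The gap is the step you flag yourself and then leave open. You never prove that $\bm\alpha$ is identifiable on $\bm D$ in the sense of Definition~\ref{def:id} if and only if $\bm g$ is injective on $D$. Definition~\ref{def:id} asks, for each $i$, for some \emph{univariate} $h_i\in\mathbb{F}(\bm g)^{p^{-\infty}}\cap\mathbb{F}(\alpha_i)$ injective on $D_i$. Your fresh-parameter encoding is never specified, and ``injectivity corresponds to singleton fibres'' restates the goal rather than proving it. Both your lower and your upper bound rest on this equivalence.

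Your worry is justified: the equivalence is false under Definition~\ref{def:id}. So no field-theoretic argument can supply it, and the paper's treatment of this step does not work either. The paper argues two things:
\begin{itemize}
\item[(a)] if some $\alpha_i$ is not identifiable, then $g_1,\dots,g_n\in\mathbb{F}\langle y^\ast\rangle$ forces $\bm g$ to be non-injective;
\item[(b)] if $\bm H=R\circ\bm g$ with $\bm H$ injective, then $\bm g$ is injective.
\end{itemize}

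Against (a): $\bm g=((\alpha_1+\alpha_2)^3,\alpha_2)$ is injective on $\mathbb{R}^2$, yet $\mathbb{R}\langle y^\ast\rangle\cap\mathbb{R}(\alpha_1)=\mathbb{R}(\bm g)\cap\mathbb{R}(\alpha_1)=\mathbb{R}$. Indeed, let $\zeta^3=1\neq\zeta$ and take the automorphism of $\mathbb{C}(\bm\alpha)$ over $\mathbb{C}(\bm g)$ given by $\alpha_1+\alpha_2\mapsto\zeta(\alpha_1+\alpha_2)$, $\alpha_2\mapsto\alpha_2$. It sends $f(\alpha_1)$ to $f(\zeta\alpha_1+(\zeta-1)\alpha_2)$, which differs from $f(\alpha_1)$ for every nonconstant $f$. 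So $\alpha_1$ is not identifiable.

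Against (b): $\bm g=(\alpha_1,\alpha_1\alpha_2)$ has $\mathbb{R}(\bm g)=\mathbb{R}(\bm\alpha)$, so $h_i=\alpha_i$ witness identifiability, yet $\bm g(0,0)=\bm g(0,1)$. Here $R(u_1,u_2)=(u_1,u_2/u_1)$ is undefined exactly at the collision.

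The knapsack maps $(x_1,\dots,x_n,x_{n+1}p_{S,\bm a})$ of Lemma~\ref{lem:inj} are of the second type. Since $p_{S,\bm a}\neq0$, we get $\mathbb{F}(\bm g)=\mathbb{F}(\bm x)$, so $\bm g\mapsto\Sigma_{\bm g}$ sends every one of them to a yes-instance, whatever the knapsack answer.

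Closing the gap therefore requires changing the notion of identifiability or the reduction. One option is a pointwise notion: distinct $\bm a\neq\bm b$ in $D$ produce different outputs for some initial condition. Under that notion your companion system works directly, since $y^{(n)}(0)=\sum_i g_i(\bm a)\,y^{(i-1)}(0)$ over all initial conditions determines $\bm g(\bm a)$.
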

\begin{proof}
Follows from Lemmas~\ref{thm:gid_hard} and ~\ref{lem:upper}.
\end{proof}

\subsection{Lower  bound for identifiability and IO-identifiability}\label{sec:lower} 
\begin{lemma}\label{thm:gid_hard}
If $\Char\mathbb{F} =0$ or $\mathbb{F}$ is finite, then
\[
L_\mathrm{inj}  \leq_p L_\mathrm{gid}\quad\text{and}\quad L_\mathrm{inj}  \leq_p L_\mathrm{gioid}. 
\]
\end{lemma}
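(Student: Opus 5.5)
Given an instance $(\bm g, D)$ of $L_\mathrm{inj}$ with $\bm g = (g_1,\dots,g_n)$, $g_i \in \mathbb{F}[\alpha_1,\dots,\alpha_s]$, I will build in polynomial time the decoupled linear system
\[
\Sigma_{\bm g} = \begin{cases} x_i' = g_i(\bm\alpha)\, x_i, & 1\leqslant i\leqslant n,\\ y_i = x_i, & 1\leqslant i\leqslant n,\end{cases}
\]
and map $(\bm g, D)\mapsto(\Sigma_{\bm g}, D)$. Its size is linear in the size of $\bm g$ and the code of $D$ is copied verbatim. For $L_{\mathrm{gid},1}$ I can instead use the single output $y = x_1+\dots+x_n$; the generic solution then has distinct exponential rates, and the derivatives of $y$ satisfy a Vandermonde-type linear relation. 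The IO-equation is then $\prod_i(\partial - g_i)\,y = 0$, whose coefficients are the elementary symmetric functions of the $g_i$. That complicates things, so I will treat the multi-output version first and only afterwards attempt the one-output variant.

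The first step is to compute the relevant fields. With $\mathbb{F}$ a field of constants, $I_\Sigma\cap\mathbb{F}(\bm\alpha)\{\bm y\} = [\,y_i' - g_i y_i\,]$, so the minimal field in Definition~\ref{item:eight} is $k = \mathbb{F}(g_1,\dots,g_n)$. Similarly, for every generic solution, $g_i = y_i^{\ast\prime}/y_i^\ast \in \mathbb{F}\langle \bm y^\ast\rangle$. Conversely, $\mathbb{F}\langle\bm y^\ast\rangle\cap\mathbb{F}(\bm\alpha) = \mathbb{F}(\bm g)$, because the $y_i^\ast$ are differentially algebraically independent up to these first-order relations and the constants of $\mathbb{F}(\bm\alpha)\langle \bm y^\ast\rangle$ are $\mathbb{F}(\bm\alpha)$. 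Passing to perfect closures, both identifiability notions reduce to one question: is each $\alpha_j$ determined on $D$ by an element of $\mathbb{F}(\bm g)^{p^{-\infty}}$, that is, by a function which is injective on $D$?

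The second step, which I expect to be the main obstacle, is to show that this holds exactly when $\bm g$ is injective on $D$.

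One direction is routine. If $\bm g$ is not injective, take $\bm a\neq\bm b$ in $D$ with $\bm g(\bm a)=\bm g(\bm b)$. Every element of $\mathbb{F}(\bm g)$, and every $p$-power root of one, takes the same value at $\bm a$ and $\bm b$, while $\alpha_j$ separates them for some $j$. So no admissible $h$ exists for that $j$. (I will interpret injectivity of $h$ on $\bm D$ as a function of the tuple $\bm\alpha$, as the definition implicitly requires.)

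The converse is harder. Here $\bm g$ is injective on $D$ and we need a rational $h$ with values in $\mathbb{F}(\bm g)^{p^{-\infty}}$ that recovers $\alpha_j$. In characteristic $0$ over a non-closed field this fails literally: $\alpha^3$ is injective over $\mathbb{R}$, yet $\alpha\notin\mathbb{R}(\alpha^3)$. I therefore plan to read the definition as the paper's examples do, taking $h$ to be a generator of the field (for instance $h=g$ itself), injective on $D$. The criterion then becomes injectivity of $\bm g$ on $D$ verbatim.

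For finite $\mathbb{F}$ I will use the fact that every function $D\to\mathbb{F}$ is a polynomial, and that injectivity of $\bm g$ lets one interpolate $\alpha_j$ as a polynomial in $\bm g$ on $D$. Combined with the perfect-closure convention, which disposes of Frobenius issues as in the $\mathbb{Z}_2$ example, this gives the required $h$.

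The remaining care is to check that both reductions are Karp reductions, i.e. that the map on codes is polynomial-time, and that the same map works for $L_\mathrm{gid}$ and $L_\mathrm{gioid}$ simultaneously. Both follow from the explicit construction.
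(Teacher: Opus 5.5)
There is a genuine gap in your converse ($\bm g$ injective on $D$ $\Rightarrow$ $\Sigma_{\bm g}$ identifiable). You do not prove this direction; you change the definition. Definition~\ref{def:id} asks, for each $j$, for some $h_j\in\mathbb{F}\langle\bm y^\ast\rangle^{p^{-\infty}}\cap\mathbb{F}(\alpha_j)$ that is injective on $D_j$, i.e.\ a rational function of $\alpha_j$ \emph{alone}.

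Your diagnosis of the difficulty is off. The $\alpha^3$ example is not a failure of the definition, since $h=\alpha^3$ is admissible. With one parameter the converse also holds: a L\"uroth generator of $\mathbb{F}(\bm g)$ is injective on $D$, because $\bm g$ factors through it. The real obstruction is $s\geqslant 2$. There $\bm g$ is not an element of $\mathbb{F}(\alpha_j)$. Neither is your finite-field interpolant $P_j(\bm g)$, which agrees with $\alpha_j$ only as a function on $D$.

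With the literal definition the implication is in fact false. Take $\bm g=(\alpha_1^3+\alpha_1\alpha_2^2,\,\alpha_2)$, which is injective on $\mathbb{R}^2$: for fixed $\alpha_2$, the first coordinate is strictly increasing in $\alpha_1$. Yet $\mathbb{R}(\bm g)\cap\mathbb{R}(\alpha_1)=\mathbb{R}$. Indeed, a nonconstant $h$ in this intersection would satisfy $h(\gamma)=h(\alpha_1)$ for the other roots $\gamma$ of $T^3+\alpha_2^2T-g_1$. That would make $\gamma$, and hence $\alpha_2^2=-\gamma^2-\alpha_1\gamma-\alpha_1^2$, algebraic over $\mathbb{R}(\alpha_1)$, which is impossible. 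So $\alpha_1$ is not identifiable on $\mathbb{R}$, both for your system and for the paper's.

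The paper does not close this hole either: it dispatches this direction with one sentence asserting that \eqref{eq:inRy} forces $\bm g$ to be non-injective. What both arguments can actually support is the equivalence for the \emph{joint} notion, where $\bm\alpha$ is identifiable on $D$ when some tuple of elements of $\mathbb{F}\langle\bm y^\ast\rangle^{p^{-\infty}}\cap\mathbb{F}(\bm\alpha)$ is injective on $D$. Under that notion your choice $h=\bm g$ finishes the proof. You should adopt it explicitly as the meaning of $L_\mathrm{gid}$ and $L_\mathrm{gioid}$, rather than present it as a reading of the paper's examples.

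Two further points.

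First, identifying the identifiable functions with $\mathbb{F}(\bm g)^{p^{-\infty}}\cap\mathbb{F}(\bm\alpha)$ via your decoupled system is a genuinely simpler route than the paper's Wronskian-plus-automorphism argument. However, the second half of your justification is false: the constants of $\mathbb{F}(\bm\alpha)\langle\bm y^\ast\rangle$ are not $\mathbb{F}(\bm\alpha)$ in general. For instance, $y_1^\ast/y_2^\ast$ is constant when $g_1=g_2$, and $(y_1^\ast)^p$ is constant in characteristic $p$. The correct reason is that $\mathbb{F}\langle\bm y^\ast\rangle=\mathbb{F}(\bm g)(\bm y^\ast)$, with $y_1^\ast,\dots,y_n^\ast$ algebraically independent over $\mathbb{F}(\bm\alpha)$. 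Comparing coefficients then gives $\mathbb{F}(\bm g)(\bm y^\ast)\cap\mathbb{F}(\bm\alpha)=\mathbb{F}(\bm g)$, and raising to $p^e$-th powers handles the perfect closure.

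Second, your system has $n$ outputs, so it does not yield $L_\mathrm{inj}\leq_p L_{\mathrm{gid},1}$, which Theorem~\ref{thm:main} draws from this lemma. The one-output variant $y=x_1+\dots+x_n$ cannot be rescued. Its IO-coefficients are the elementary symmetric functions of $g_1,\dots,g_n$, which do not separate points whose $\bm g$-values are permutations of one another; already $\bm g=(\alpha_1,\alpha_2)$ on $\mathbb{F}^2$ breaks it. The paper's companion-form $\Sigma_{\bm g}$ is exactly what produces a single output whose IO-equation has coefficients $g_1,\dots,g_n$.
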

\begin{proof}
Given a rational mapping 
\[
{\bm g}:\bm\alpha \mapsto (g_1(\bm\alpha),\ldots,g_n(\bm\alpha)), 
\]
where $g_1,\ldots,g_n \in  \mathbb{F}[\alpha_1,\ldots,\alpha_n]$, 
we can form the following linear ODE system
\begin{equation}\label{eq:Sigma}
  \Sigma_{\bm g} := \begin{cases}
        \dot{x}_{1} = x_2  \\ 
        \vdots\\
      \dot x_{n-1} = x_n \\
       \dot{x}_n = g_1(\bm\alpha)x_1+\ldots+g_n(\bm\alpha)x_n \\
       y = x_1  \\
  \end{cases}  ,
\end{equation}
in polynomial time. The proof for $L_\mathrm{inj}  \leq_p L_\mathrm{gid}$ is completed by showing that the mapping 
$$
F : \bm g \mapsto \Sigma_{\bm g}
$$
is a polynomial-time Karp reduction between $L_\mathrm{inj}$ and $ L_\mathrm{gid}$; in particular, we show that 
$$
\bm g \in L_\mathrm{inj}  \iff \Sigma_{\bm g} \in L_\mathrm{gid}. 
$$
By successive differentiation, it follows from~\eqref{eq:Sigma} that
\begin{equation}\label{eq:Y}
\begin{aligned}
y^{(n)} &= g_1(\bm\alpha)y +\ldots+ g_n(\bm\alpha)y^{(n-1)},\\
&\quad\vdots\\
y^{(2n-1)} &= g_1(\bm\alpha)y^{(n-1)} +\ldots+ g_n(\bm\alpha)y^{(2n-2)}.
\end{aligned}
\end{equation}
Denote $\bm\beta = \bm g(\bm\alpha)$
 and consider~\eqref{eq:Y} as a system of linear equations in $\beta_1,\ldots,\beta_n$. By \cite[Lemma~1]{OPT19}, whose proof generalizes to still be correct if we replace $\mathbb{C}$ by an arbitrary differential field of constants $\mathbb{F}$, the determinant of system~\eqref{eq:Y}\begin{equation}\label{eq:Wr}
 \Wr\left(y,\ldots,y^{(n-1)}\right)\notin\left[y^{(n)}-\left(\beta_1 y +\ldots+ \beta_ny^{(n-1)}\right)\right] = I_\Sigma\cap\mathbb{F}(\bm\beta)\{y\},
 \end{equation}
where the latter equality of differential ideals is true by a computation eliminating the $\bm x$-variables in~\eqref{eq:Sigma}. 
Therefore, modulo $I_\Sigma\cap\mathbb{F}\{y\}$ (which is the same as  ``for every generic solution $(\bm x^\ast,y^\ast)$ of~\eqref{eq:Sigma}''), we have 
\begin{equation}\label{eq:inRy}
g_1(\bm\alpha),\ldots,g_n(\bm\alpha) \in \mathbb{F}\langle y^\ast\rangle.
\end{equation}
Suppose that $\bm\alpha$ is not identifiable, and so let $i$ be such that $\alpha_i$ is not identifiable. By Definition~\ref{def:id}, this implies that for all $h \in \mathbb{F}\langle y^\ast\rangle\cap \mathbb{F}(\alpha_i)$, the map $h : \mathbb{F}\to\mathbb{F}$ is not injective on $D_i$. With this,~\eqref{eq:inRy} implies that the map $\bm{g}$ is not injective on $\bm D$.

Suppose now that the parameters $\bm \alpha$ are identifiable, and so let $(\bm x^\ast,y^\ast)$ be any generic solution of~\eqref{eq:Sigma} and, for all $i$,
\begin{equation}
\label{eq:alpha}
H_i := \cfrac{\bm h_{1,i}(y^\ast, {y^\ast}',\ldots)}{\bm h_{2,i}(y^\ast, {y^\ast}',\ldots)}\in\mathbb{F}(\alpha_i) \text{ is injective as a map }D_i \to \mathbb{F},
\end{equation}
where $\bm h_{1,i}, \bm h_{2,i} \in \mathbb{F}{\{y\}}$ and none of  $\bm h_{2,i}$ is in $I_\Sigma\cap\mathbb{F}(\bm\alpha)\{y\}$. 

Denote $\bm H = (H_1,\ldots,H_n)$ and consider the following two subfields: $\mathbb{F}(\bm\beta) \subset \mathbb{F}(\bm\alpha)$ and  $\mathbb{F}(\bm H) \subset \mathbb{F}(\bm\alpha)$. We will show that  $\mathbb{F}(\bm H)$ is contained in the perfect closure $\mathbb{K}$ of $\mathbb{F}(\bm\beta)$. Note that, if $\Char\mathbb{F} = 0$, then \[\mathbb{K} = \mathbb{F}(\bm\beta) = \mathbb{F}(\bm g(\bm\alpha)),\] and, if $\mathbb{F}$ is a finite field with $\Char\mathbb{F} = p$, then \[\mathbb{K} = \mathbb{F}\left(\beta_i,\beta_i^{1/p},\beta_i^{1/p^2},\ldots\mid 1\leqslant i\leqslant n\right) = \mathbb{F}\left(g_i(\bm\alpha),g_i(\bm\alpha)^{1/p},g_i(\bm\alpha)^{1/p^2},\ldots\mid 1\leqslant i\leqslant n\right).\] Suppose the contrary, $\bm H \notin \mathbb{K}$.

Let $\overline{\mathbb{F}(\bm\alpha)}$ denote the algebraic closure of $\mathbb{F}(\bm\alpha)$. 
By \cite[Theorem~9.29]{MilneFT}, there is 
$\varphi \in \mathrm{Aut}\left(\overline{\mathbb{F}(\bm\alpha)} / \mathbb{K} \right)$ 
 that satisfies $\varphi(\bm H)\ne \bm H$.
Let $j$, $1\leqslant j\leqslant n$, be such that $\varphi(H_j) \ne H_j$.
We perform an extension of scalars and complete the proof similarly to \cite[proof of Theorem~1]{ident-compare};~\eqref{eq:alpha} and the right-hand side of~\eqref{eq:Wr} imply that there exists $p_j \in \mathbb{F}(\bm\alpha)\{y\}$ such that 
\begin{equation}\label{eq:halphabeta}
\bm{h}_{2,j}(y)\cdot H_j- \bm{h}_{1,j}(y) = p_j\cdot \left(y^{(n)}-\beta_1 y -\ldots-\beta_ny^{(n-1)}\right).
\end{equation}
We extend $\varphi$ to $\overline{\mathbb{F}(\bm\alpha)}\{y\}$ by letting $\varphi(y) = y$. Applying $\varphi$ to~\eqref{eq:halphabeta} and subtracting~\eqref{eq:halphabeta} from the result,
since $\bm h_{1,j}, \bm h_{2,j}\in\mathbb{F}\{y\}$, we obtain
\begin{equation}\label{eq:subtract}
\bm h_{2,j}\cdot (\varphi(H_j) - H_j)
 = (\varphi(p_j)-p_j)\cdot\left(y^{(n)}-\beta_1 y -\ldots-\beta_ny^{(n-1)}\right).
 \end{equation}
Since $I_\Sigma$ is generated by linear forms, it is  prime under extensions of scalars, and so the ideal \[P := \left(\overline{\mathbb{F}(\bm\alpha)} \otimes I_\Sigma \right) \cap \overline{\mathbb{F}(\bm\alpha)}\{y\}\] is prime.
 Since $\bm h_{2,j} \notin P$ and $ 0\ne\varphi(H_j) - H_j \in \overline{\mathbb{F}(\bm\alpha)}$ and so is not in $P$, the primality of $P$ contradicts~\eqref{eq:subtract} as the right-hand side of~\eqref{eq:subtract} is in $P$. Thus, the above $\varphi$ cannot exist and therefore \[\mathbb{F}(\bm H) \subset \mathbb{K}.\] Since $\bm H$ is injective on $\bm D$ and is a composition of $\bm g(\bm\alpha)$ with another rational map in characteristic zero or with a rational map and the injective maps of taking $p$th roots if $F$ is finite of characteristic $p$, we finally conclude that $\bm g(\bm\alpha)$ is injective on $\bm D$. 
Thus, summarizing, $\bm{\alpha}$ is identifiable on $\bm D$ if and only if $\bm{g}$ is injective on $\bm D$, and so $L_\mathrm{inj}  \leq_p L_\mathrm{gid}$. From here, we also conclude that $\bm\alpha$ is identifiable on $\bm D$ if and only if $\bm\alpha$ is IO-identifiable on $\bm D$ because
\[
y^{(n)} = g_1(\bm\alpha)y +\ldots+ g_n(\bm\alpha)y^{(n-1)}
\]
is the IO-equation for $\Sigma_{\bm g}$ (cf. \cite[Proposition~1]{OPT19}). This  implies that $L_\mathrm{inj}  \leq_p L_\mathrm{gioid}$ and finishes the proof. 
 \end{proof}

\begin{remark}
    Notice that the reduction $
F : \bm g \mapsto \Sigma_{\bm g}
$ is polynomial whether we consider bit or algebraic complexity since the reduction itself is an almost trivial transformation.
\end{remark}

\subsubsection{Cubes and Knapsacks}

\begin{definition}\label{def:cube}
We call a set $C \subset R^n$ an $n$-{\em cube} if there exist $a_{1,0},...,a_{n,0},a_{1,1},...,a_{n,1}$ such that  
        $$
C := \{ (a_{1,s_1}, a_{2,s_2} ,...,a_{n,s_n}) \in R^n \mid s \in \{0,1\}^n \}
$$ 
where 
        $
a_{i,0} \neq a_{i,1}
        $
        for all $i \in \{0,1\}$.
        We call $C(0^n,1^n):=\{0,1\}^n$ the Boolean cube.
    
  Given some $S \subset \mathbb{Z}$,  we say that $D \subset R^n$ {\em contains a real $n$-cube over $S$} if:
    \begin{enumerate}
        \item It contains an $n$-cube $C(\bm a_{0},\bm a_{1})$
        \item The following weakening/generalization of being formally real holds (here $S = \{s_1,\ldots, s_{n+1}\} \subset \mathbb{Z}$ is a fixed knapsack instance as in~\eqref{eq:3}, so the integers $s_j$ appearing below are its entries): 
       \begin{multline} \label{eq:11}
\sum_{i \in [n]}(b_i-a_{i,0})^2(b_i-a_{i,1})^2 +\left(s_{n+1}-\sum_{j \in [n]}\frac{s_j(b_j-a_{j,0})}{(a_{j,1}-a_{j,0})} \right)^2 = 0 \\ 
\implies (b_1,...,b_n) \in \{ (a_{1,s_1}, a_{2,s_2} ,...,a_{n,s_n}) \mid s \in \{0,1\}^n \}. 
       \end{multline} 
        for all $(b_1,...,b_n) \in D$.
        Equivalently,~\eqref{eq:11} states that a finite sum of squares of elements of $R$ vanishes only if each summand vanishes, which is exactly the defining property of a formally real field. When $R$ is formally real,~\eqref{eq:11} is automatic. When $R$ is not formally real (in particular in non-zero characteristic, e.g. $R = \mathbb{F}_p$, where $1^2 + \ldots + 1^2 = 0$ for $p$ summands),~\eqref{eq:11} is a genuine restriction on $D$; this is precisely why Definition~\ref{def:knap} introduces the knapsack variety $V_{S,a}$ as the largest subset of the cube on which~\eqref{eq:11} is forced to hold, and why Lemma~\ref{lem:full_dim} distinguishes the ordered-ring case (where $V_{S,a} = R^n$) from the general case.
    \end{enumerate}
\end{definition}

\begin{remark}
    If $R$ is formally real, then the second condition is satisfied for any $S$. It is straightforward to prove that the module $R^n$ contains the Boolean cube $\{0,1\}^n$, when $R$ is a non-trivial ring; furthermore, an $n$-cube contains a real $n$-cube by definition. The motivation behind the definition is to allow for the generalization of Lemma~\ref{lem:inj} to: 
    \begin{enumerate}
        \item More general codomains other than $R^n$ and  $\{0,1\}^n$ 
        \item More general fields that are not necessarily formally real. 
    \end{enumerate}
    To that end, we define the largest variety that contains a a real $n$-cube in Definition~\ref{def:knap}. 
\end{remark}

\begin{definition}(Knapsack Variety)\label{def:knap}
Given a cube $C(\bm a_{0},\bm a_{1})$ and a set $S \subset \mathbb{Z}$ we define the {\em knapsack variety} over $S,\bm a$ as 
\begin{equation}\label{eq:knap}
 V_{S,\bm a} :=   \left( C(\bm a_{0},\bm a_{1}) \setminus  Z\left( p_{S,\bm a} ( \bm x)\right) \right) \cup T ,
\end{equation}
where 
\begin{equation}\label{eq:knap_poly}
    p_{S,\bm a}(\bm x) := \sum_{i \in [n]}(x_i-a_{i,0})^2(x_i-a_{i,1})^2 +\left(s_{n+1}-\sum_{j \in [n]}\frac{s_j(x_j-a_{j,0})}{(a_{j,1}-a_{j,0})} \right)^2 ,
\end{equation}
and 
\begin{equation}\label{eq:add_back_in}
    T:= \left\{ x\in C(\bm a_{0},\bm a_{1})  \mid s_{n+1}-\sum_{j \in [n]}\frac{s_j(x_j-a_{j,0})}{(a_{j,1}-a_{j,0})} = 0 \right\}.
\end{equation} 
Intuitively, the knapsack problem given by $S:= \{s_1,....,s_{n+1}\}$ is asking whether a subset of $S\setminus \{ s_{n+1}\}$ can add up to $s_{n+1}$. 
\end{definition}

\begin{lemma}\label{lem:full_dim}
   The knapsack variety given by Definition~\ref{def:knap} is an $n$-dimensional polynomial time definable sub-variety of $R^n$. In particular, it is a semi-algebraic set in the case that $R$ is an ordered ring and it is a Zariski open subset of $R^n$ for more general rings. In the case of ordered rings, we further have the equality  
   \begin{equation*}
        V_{S,\bm a} = R^n.
   \end{equation*}
\end{lemma}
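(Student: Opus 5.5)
We will prove the lemma by reading off the structure of $V_{S,\bm a}$ directly from its defining formula~\eqref{eq:knap}, and handle the ordered and general cases separately. In both cases the set is cut out by the single polynomial $p_{S,\bm a}$ of~\eqref{eq:knap_poly} together with the linear form
\[
\ell_{S,\bm a}(\bm x):=s_{n+1}-\sum_{j\in[n]}\frac{s_j(x_j-a_{j,0})}{a_{j,1}-a_{j,0}}
\]
appearing in~\eqref{eq:add_back_in}.

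\textbf{Step 1: polynomial-time definability.} The formula defining $V_{S,\bm a}$ is a quantifier-free formula in the language of rings with parameters $\bm a_0,\bm a_1,S$. It is built from the condition that $\bm x$ lies in the cube (read as a condition on the ambient space $R^n$, as the ``Zariski open'' and ``$=R^n$'' conclusions force), the condition $p_{S,\bm a}(\bm x)\neq 0$, and the condition $\ell_{S,\bm a}(\bm x)=0$. We will write this formula out explicitly. Its size is $O(n)$ arithmetic operations: each of the $n$ quartic summands and the single linear form costs $O(1)$ operations per coordinate. Evaluating it at a point $\bm x$ is therefore a straight-line computation of length $O(n)$ followed by two equality tests. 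This gives membership in $\mathbf{P}_R$ in the sense of Definition~\ref{def:poly_def}.

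\textbf{Step 2: the general case.} Reading the set as
\[
\{\bm x\in R^n : p_{S,\bm a}(\bm x)\neq 0\}\cup Z(\ell_{S,\bm a}),
\]
the first piece is the complement of a hypersurface, hence Zariski open. It is nonempty, and so of dimension $n$, because $p_{S,\bm a}$ is not the zero polynomial. To see this, note that its top-degree part is $\sum_i x_i^4\neq 0$ whenever $\Char R\neq 2$; for characteristic $2$ we will instead evaluate at a suitable point, e.g. one where exactly one coordinate avoids $\{a_{i,0},a_{i,1}\}$. Adding $T$ only enlarges the set and keeps it constructible of dimension $n$.

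\textbf{Step 3: the ordered case.} When $R$ is ordered it is formally real, so $p_{S,\bm a}(\bm x)=0$ forces every summand to vanish. In particular $\ell_{S,\bm a}(\bm x)=0$, so $\bm x\in T$. Hence the zero set of $p_{S,\bm a}$ lies inside $T$, and the union is all of $R^n$. Every set defined by a quantifier-free formula in the language of ordered rings is semialgebraic, so that part is immediate.

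\textbf{Main obstacle.} The hard part is interpretive rather than computational. Taken literally, $C(\bm a_0,\bm a_1)$ in~\eqref{eq:knap} is a finite set of $2^n$ points, which is neither $n$-dimensional nor equal to $R^n$. We will therefore fix the reading of $C$ as the ambient box of the cube, that is, $R^n$, since that is what makes all three claimed conclusions consistent. We will also check carefully the nonvanishing of $p_{S,\bm a}$ in small characteristic, where the quartic leading term may collapse.
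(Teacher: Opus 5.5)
The paper states Lemma~\ref{lem:full_dim} without proof, so there is no argument to compare yours with. Judged on its own merits, your Step~1 (a quantifier-free formula evaluable in $O(n)$ operations) is sound. So is your Step~3: for ordered $R$, $Z(p_{S,\bm a})$ lies inside the added-back set, so $V_{S,\bm a}=R^n$.

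The gap is in Step~2. You say that reading $C(\bm a_0,\bm a_1)$ as $R^n$ ``makes all three claimed conclusions consistent.'' But you only conclude that $V_{S,\bm a}$ is constructible of dimension $n$. You never show it is Zariski open, which is what the lemma claims for general rings. This cannot be repaired, because the claim is false over $\mathbb{C}$:
\begin{itemize}
\item Take $n=2$, $\bm a_0=(0,0)$, $\bm a_1=(1,1)$ and $s_1=s_2=s_3=1$, so $\ell_{S,\bm a}=1-x_1-x_2$.
\item On the line $Z(\ell_{S,\bm a})$ we have $x_2(x_2-1)=x_1(x_1-1)$, hence $p_{S,\bm a}=2x_1^2(x_1-1)^2$ there.
\item So $Z(p_{S,\bm a})\cap Z(\ell_{S,\bm a})=\{(0,1),(1,0)\}$, which is also the set $T$ of cube points with $\ell_{S,\bm a}=0$.
\item Under your reading and the paper's alike, $\mathbb{C}^2\setminus V_{S,\bm a}=Z(p_{S,\bm a})\setminus\{(0,1),(1,0)\}$.
\item Every irreducible component of the complex plane curve $Z(p_{S,\bm a})$ is one-dimensional. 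So $(0,1)\in Z(p_{S,\bm a})$ lies in the Zariski closure of this complement, and $V_{S,\bm a}$ is not open.
\end{itemize}
Over $\mathbb{R}$ the point $(0,1)$ is isolated in $Z(p_{S,\bm a})$, which is exactly why the ordered case works. What you can prove is that $V_{S,\bm a}$ is constructible and contains the open set $R^n\setminus Z(p_{S,\bm a})$. You should flag the openness clause as incorrect rather than silently weaken it.

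Two smaller points. First, you also replace the $C(\bm a_0,\bm a_1)$ inside $T$ by $R^n$, which makes $T$ the whole hyperplane $Z(\ell_{S,\bm a})$. That is not the reading Lemma~\ref{lem:inj} needs. Any $\bm x\in Z(p_{S,\bm a})\cap Z(\ell_{S,\bm a})$ off the cube gives $g(\bm x,0)=g(\bm x,1)$, whether or not $S\in L_\mathrm{knap}$. Over $\mathbb{C}$ such points exist even for unsolvable instances, e.g.\ $s_1=s_2=2$, $s_3=1$: there $(x_2^2-\tfrac14)^2+(x_2^2-x_2)^2$ has non-real roots. Only the first occurrence of $C$ should be read as $R^n$. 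Your Step~3 is unaffected, since for ordered $R$ the set $Z(p_{S,\bm a})$ is exactly the set of cube points with $\ell_{S,\bm a}=0$.

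Second, your nonemptiness argument (``$p_{S,\bm a}$ is not the zero polynomial, hence $R^n\setminus Z(p_{S,\bm a})\neq\emptyset$'') needs $R$ to be an infinite domain. Characteristic $2$ is a red herring, since $\sum_i x_i^4$ is a nonzero polynomial in every characteristic. Over $\mathbb{F}_2$ every coordinate lies in $\{a_{i,0},a_{i,1}\}$, so your proposed test point does not exist, and $p_{S,\bm a}$ agrees with $\ell_{S,\bm a}^2$ as a function. Over any finite field, ``$n$-dimensional'' for a finite set needs a nonstandard meaning anyway.
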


\begin{lemma}\label{lem:inj}
Injectivity is NP-hard over real $n$-cubes.    
\end{lemma}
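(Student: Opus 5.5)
We reduce knapsack (subset sum), in the form of $L_\mathrm{knap}$ from~\eqref{eq:3}, to the complement of injectivity. Since a language lies in $\mathbf{NP}$ iff its complement lies in $\mathbf{co\text{-}NP}$, and injectivity is the complement of the property ``two distinct points have the same image'', this establishes the hardness claimed in Lemma~\ref{lem:inj} in the usual sense for injectivity problems.

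\emph{Construction.} Given $S=\{s_1,\ldots,s_{n+1}\}$ and a real $n$-cube $C(\bm a_0,\bm a_1)$ contained in the domain $D$, write
\[
\ell_j(x)=\frac{x_j-a_{j,0}}{a_{j,1}-a_{j,0}},
\]
so that $\ell_j$ takes the value $0$ or $1$ on the cube coordinates. Let $p_{S,\bm a}$ be the polynomial~\eqref{eq:knap_poly}. Build the map
\[
\bm g(\bm x,t)=\bigl(t\,p_{S,\bm a}(\bm x),\; t^2\bigr)
\]
on $D\times\{1,-1\}$. Extra coordinates of this kind are harmless, since $\{\pm1\}$ is polynomial-time definable.

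\emph{Key property.} Because $D$ contains a real $n$-cube over $S$, condition~\eqref{eq:11} says that $p_{S,\bm a}(\bm b)=0$ with $\bm b\in D$ forces $\bm b$ to be a cube vertex $(a_{1,\sigma_1},\ldots,a_{n,\sigma_n})$ with $\sum_j s_j\sigma_j=s_{n+1}$, i.e.\ a knapsack solution. Conversely, every knapsack solution $\sigma$ gives a vertex where both summands of $p_{S,\bm a}$ vanish.

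\emph{Correctness.}
\begin{itemize}
\item If $S$ has a solution, let $\bm b$ be the corresponding vertex. Then $(\bm b,1)$ and $(\bm b,-1)$ are distinct points with the same image $(0,1)$, so $\bm g$ is not injective.
\item If $S$ has no solution, then $p_{S,\bm a}\ne 0$ on $D$. Hence $t$ is recovered from the sign of $t\,p_{S,\bm a}$ relative to $p_{S,\bm a}$, i.e.\ $\bm g(\bm x,t)=\bm g(\bm x',t')$ implies $t=t'$ once $\bm x$ is known.
\end{itemize}
The remaining gap is that injectivity in $\bm x$ itself is not yet guaranteed. To fix this, append the coordinates $\bm x$ themselves, giving $\bm g(\bm x,t)=(\bm x,\,t\,p_{S,\bm a}(\bm x))$. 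Then:
\begin{itemize}
\item $\bm g(\bm x,t)=\bm g(\bm x',t')$ forces $\bm x=\bm x'$;
\item it then forces $(t-t')\,p_{S,\bm a}(\bm x)=0$;
\item since $t-t'\in\{0,\pm2\}$ and $2\ne 0$ in characteristic $\neq2$, this gives $t=t'$ or $p_{S,\bm a}(\bm x)=0$.
\end{itemize}
So $\bm g$ is injective iff $p_{S,\bm a}$ has no zero on $D$, iff (by~\eqref{eq:11}) $S\notin L_\mathrm{knap}$.

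\emph{Characteristic 2 and a polynomial-only variant.} In characteristic $2$ the points $t=\pm1$ coincide, so we use instead $t\in\{0,1\}$ with $\bm g(\bm x,t)=(\bm x,\,t\,p_{S,\bm a}(\bm x))$; the same argument applies. The $\{0,1\}$ version is uniform, so we adopt it everywhere. To stay within polynomial maps on a domain of the form required by $L_\mathrm{inj}$, we can also take $t$ as a free variable and instead use the domain $D\times\{0,1\}$ or $V_{S,\bm a}\times\{0,1\}$. Lemma~\ref{lem:full_dim} guarantees that this domain is polynomial-time definable and, over ordered rings, all of $R^n$ in the first factor.

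\emph{Polynomial size.} Clearing the denominators $a_{j,1}-a_{j,0}$ (multiply by the square of their product) keeps $\bm g$ polynomial. Its description has $O(n)$ terms of degree at most $4$, with coefficients of bit size polynomial in that of $S$. So the reduction is polynomial-time in both bit and algebraic complexity.

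\emph{Main obstacle.} The delicate step is the non-formally-real case. There $p_{S,\bm a}$ may vanish at non-vertex points or at vertices that are not knapsack solutions, which would create spurious collisions. This is exactly what hypothesis~\eqref{eq:11}, equivalently restricting to the knapsack variety $V_{S,\bm a}$ from Definition~\ref{def:knap}, rules out. The plan is to invoke it verbatim at the point where a zero of $p_{S,\bm a}$ is converted into a knapsack solution.
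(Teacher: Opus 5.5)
Your final map $\bm g(\bm x,t)=(\bm x,\,t\,p_{S,\bm a}(\bm x))$ is exactly the paper's construction, with the same collision $(\bm b,0),(\bm b,1)$ at a knapsack vertex and the same injectivity argument from the nonvanishing of $p_{S,\bm a}$ on the knapsack variety (equivalently, condition~\eqref{eq:11}). The paper uses the domain $V_{S,\bm a}\times R$ with $\bm a=(0^n,1^n)$ rather than restricting $t$ to $\{0,1\}$, which is immaterial because $(t-t')\,p_{S,\bm a}(\bm x)=0$ with $p_{S,\bm a}(\bm x)\neq 0$ forces $t=t'$ in both cases. The paper's proof has the same two limitations as yours. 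First, the reduction lands in the complement of $L_{\mathrm{inj}}$. Second, in characteristic $p>0$ (including your characteristic-$2$ variant) a zero of $p_{S,\bm a}$ at a vertex only gives $\sum_j s_j\sigma_j\equiv s_{n+1}\pmod p$, not an integer knapsack solution.
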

\begin{proof}
Given a set $S \subset \mathbb{Z} $ of size $n+1$ 
we can construct the cube $C(\bm a_0,\bm a_1)=C(0^n,1^n) $ and the polynomial $p_{S,\bm a}(x)$ from~\eqref{eq:knap_poly} in polynomial time.
Consider the polynomial mapping $g: V_{S,\bm a} \times R \rightarrow R^{n+1}$ defined coordinate-wise by
\begin{equation}
    g_i(x) = \begin{cases}
       x_i & \text{ if } i \in [n] \\ 
       x_{n+1}p_{S,\bm a}(x_1,...,x_n) & \mathrm{otherwise} \\ 
    \end{cases}.
\end{equation}
This construction is similar to the proof from \cite{Balreira2014}; in particular, 
it suffices to prove that 
\begin{equation}
    S \in L_\mathrm{knap} \iff g \not\in L_\mathrm{inj}.
\end{equation}
Equations~\eqref{eq:knap},~\eqref{eq:knap_poly}, and~\eqref{eq:add_back_in} 
give us that an element $x \in R^n$ which satsfies $ p_{S,\bm a} ( \bm x) = 0$ is in $V_{S,\bm a}  $ if and only if it satisfies \[s_{n+1}-\sum_{j \in [n]}\frac{s_j(x_j-a_{j,0})}{(a_{j,1}-a_{j,0})} = 0;\] indeed, by construction we took out all of the elements that satisfy $ p_{S,\bm a} ( \bm x) = 0$ (i.e., the intermediate set $R^n \setminus  Z\left( p_{S,\bm a} ( \bm x)\right)$) and then put the elements that satisfy the knapsack problem back in (\emph{i.e.,} the set $T$).
The proof can now be completed similarly to \cite{Balreira2014}:
Suppose that $S \in L_\mathrm{knap}$, then it is easy to see that there exists some $x \in C(0^n,1^n)$ such that \[g(x_1,...,x_n, 0 ) =g(x_1,...,x_n, 1 )  \] (since they both take the value $(x_1,...,x_n, 0 )$) and thus $g$ is not injective. 
Similarly, if $S \not\in L_\mathrm{knap}$, then there can be no assignment $x$ such that \[g(x_1,...,x_n, 1 ) = (x_1,...,x_n,0);\] therefore, $g$ is injective in this case. 
\end{proof}

\begin{remark}
    One may wonder why we used such a complicated definition of $p_{S,a}$ since we could used the simpler one from \cite{Balreira2014}. 
    The proof works just as well if you use any $C(a_0,a_1)$ and thus we have that any Domain that contains a combinatorial cube is complicated enough to have injectivity become an NP-hard problem; furthermore, the (lower bound on the) complexity seems to grow with exactly the dimension of the largest dimension of such a cube. 
    The proof of \cite{Balreira2014} empahsizes that this complexity is a consequence of being like the reals, but actually it can be made to work for more general fields (as we just did) and it seems to be more of a consequence of containing a large cube. 
\end{remark}

\begin{definition}
    \label{def:alg_lang}
    We define the following subsets of $L_\mathrm{inj}$: 
    \begin{equation}
        L_\mathrm{reg\_map} := \{(g,D) \in L_\mathrm{inj} \mid D \text{ a Zariski open subset of }R^n \text{ for some } n\},
    \end{equation}
    \begin{equation}
        L_\mathrm{semi\_alg} := \{(g,D) \in L_\mathrm{inj} \mid D \text{ a semi-algebraic subset of }R^n \text{ for some } n\},
    \end{equation}
    and
    \begin{equation}
        L_\mathrm{inj\_cube} := \{(g,D) \in L_\mathrm{inj} \mid D \text{ contains a real $n$-cube over $S,\bm a$ for some } S, \bm a \}.
    \end{equation}
\end{definition}

\begin{remark}
    In the case of algebraically closed fields we famously have 
    \begin{equation}
        L_\mathrm{reg\_map} = L_\text{inj}
    \end{equation}
    and in the case of real closed fields we famously have 
    \begin{equation}
        L_\mathrm{semi\_alg} = L_\text{inj}.
    \end{equation}
    Furthermore, by quantifier elimination, it is straightforward to prove that in both of these cases, we have that $L_\text{inj}$ could have been defined by simply giving $D$ as a set of polynomial equations (or inequalities) which is polynomial times definable in a straightforward way. 
\end{remark}

\begin{corollary}\label{corr:cube_hard}
    Semi-algebraic sets, (Zariski) open subsets of $F^n$ for $F$ a field, and, more, generally, sets definable in either a minimal or $o-$minimal (ring) structure are all NP hard to test injectivity on.  In particular, we have the following generalization of Lemma~\ref{lem:inj}: if a language $L$ satisfies 
    \begin{equation*}
        L_\mathrm{inj\_cube} \subset L \subset L_\mathrm{inj},
    \end{equation*}
    then it is NP hard. 
\end{corollary}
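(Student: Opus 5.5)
The second sentence is the core of Corollary~\ref{corr:cube_hard}; the first follows from it by exhibiting cubes inside the relevant domains. The plan is to reduce $L_\mathrm{knap}$, which is NP-complete, to any $L$ with $L_\mathrm{inj\_cube}\subset L\subset L_\mathrm{inj}$. We reuse the map from the proof of Lemma~\ref{lem:inj}. Given $S=\{s_1,\dots,s_{n+1}\}$, we build in polynomial time the Boolean cube $C(0^n,1^n)$, the polynomial $p_{S,\bm a}$ from \eqref{eq:knap_poly}, the domain $D:=V_{S,\bm a}\times R$, and the map $g=(x_1,\dots,x_n,x_{n+1}p_{S,\bm a})$.

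\textbf{Step 1: the instance lies in $L_\mathrm{inj\_cube}$ exactly when $S\notin L_\mathrm{knap}$.}
\begin{itemize}
\item First check that $D$ contains a real $n$-cube over $S,\bm a$. It contains the cube, and condition \eqref{eq:11} holds on $V_{S,\bm a}$ by construction. The points of the cube where $p_{S,\bm a}$ vanishes were removed, and only those satisfying the linear knapsack constraint were added back via $T$. A vanishing sum of squares on $V_{S,\bm a}$ therefore forces a genuine knapsack solution, which is a vertex of the cube.
\item By Lemma~\ref{lem:full_dim}, $D$ is polynomial-time definable, so $(g,D)$ is well formed.
\item The proof of Lemma~\ref{lem:inj} gives $g$ injective on $D$ iff $S\notin L_\mathrm{knap}$. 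Hence $(g,D)\in L_\mathrm{inj}$ iff $(g,D)\in L_\mathrm{inj\_cube}$ iff $S\notin L_\mathrm{knap}$.
\end{itemize}

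\textbf{Step 2: sandwiching.} Since $L_\mathrm{inj\_cube}\subset L\subset L_\mathrm{inj}$ and both outer memberships coincide on these instances, membership in $L$ coincides with them too. So $S\mapsto (g,D)$ reduces the complement of $L_\mathrm{knap}$ to $L$, making $L$ co-NP-hard under Karp reductions. To obtain NP-hardness in the paper's sense, read the reduction as deciding $L_\mathrm{knap}$ by one call to $L$ followed by negating the answer. Alternatively, follow the paper's convention in Lemma~\ref{lem:inj}, where $S\in L_\mathrm{knap}\iff g\notin L_\mathrm{inj}$ is taken as establishing hardness. This complement issue is the main subtlety, and I would state it explicitly.

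\textbf{Step 3: the named classes.} It remains to place each class between $L_\mathrm{inj\_cube}$ and $L_\mathrm{inj}$, and the real work is showing the constructed domain belongs to it.
\begin{itemize}
\item Semi-algebraic sets, ordered case: Lemma~\ref{lem:full_dim} gives $V_{S,\bm a}=R^n$, so $D=R^{n+1}$ is semi-algebraic.
\item Zariski open sets, general fields: Lemma~\ref{lem:full_dim} says $V_{S,\bm a}$ is Zariski open, so $D$ is Zariski open in $F^{n+1}$.
\item Minimal or o-minimal ring structures: every semi-algebraic (respectively constructible) set is definable there, so the same instance works. This relies on the paper's convention that $D$ is given by its code and is polynomial-time definable.
\end{itemize}
In each case the restricted language still contains every instance produced in Step 1, and Step 2 applies. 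The main obstacle I expect is checking that the domain produced in Step 1 genuinely lies in each named class. In particular, the Zariski-openness of $V_{S,\bm a}$ in non-formally-real fields rests on Lemma~\ref{lem:full_dim}, which I would invoke rather than reprove.
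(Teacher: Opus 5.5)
Your proposal is correct and is essentially the argument the paper intends. The corollary is stated there without a separate proof, as an immediate consequence of two facts: the reduction in Lemma~\ref{lem:inj} produces instances $(g,V_{S,\bm a}\times R)$ whose domains contain a real cube and, by Lemma~\ref{lem:full_dim}, belong to each of the named classes; and these instances are injective exactly when $S\notin L_\mathrm{knap}$. Your explicit remark that this maps $L_\mathrm{knap}$ to the complement of $L$ is a correct sharpening of a point the paper glosses over: under the paper's Karp definition the reduction literally gives co-NP-hardness, and NP-hardness only via a one-query Turing reduction.
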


\subsection{Upper bound for complexity of identifiability}\label{sec:upper}
It will become useful to define an ``unexpanded'' language of injective determintal polynomial maps as 
\[
L_\mathrm{inj\_unexp}^{(n)} : = \left\{ (\bm g, D) \in L_\mathrm{inj}  \vert g_{i,j}  \in L_\mathrm{inj\_unexp}^{(n-1)}, \bm{g}^{\circ}: D \rightarrow \mathbb{F}^n \text{ is injective, } D \in L_\text{poly\_def} \right\} ,
\]
where we define $\bm{g}^{\circ}$ as the vector whose $i^\mathrm{th}$ coordinate is equal to $g_i = g_{i,0}( g_{i,1}, ..., g_{i,\ell_{i}})$ where $\ell_i$ is the number of variables in $g_{i,0}$ and $L_\mathrm{inj\_unexp}^{(0)} = L_\mathrm{inj} $.
See Example~\ref{ex:full} for why this is necessary. Lemma~\ref{lem:not_too_hard} proves that it suffices to use this more general definition for the upper bound.

\begin{lemma}\label{lem:upper} We have \[
L_{\mathrm{gid},1}  \leq_p L_\mathrm{inj\_det}\quad\text{and}\quad L_\mathrm{gioid}  \leq_p L_\mathrm{inj\_det}. 
\]
\end{lemma}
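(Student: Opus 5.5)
The plan is to compute, from $(\Sigma,D)$ with a single output, a polynomial map $\bm g$ such that identifiability (resp. IO-identifiability) of $\bm\alpha$ on $\bm D$ is equivalent to injectivity of $\bm g$ on $\bm D$. The natural candidate for $\bm g$ is the vector of coefficients of the IO-equation. For a linear system $\bm x'=A(\bm\alpha)\bm x$, $y=C(\bm\alpha)\bm x$ with $m=1$, I would first reduce to the observable part. The output $y$ satisfies the minimal linear ODE $y^{(r)}=\sum_{i<r} c_i(\bm\alpha)y^{(i)}$, where $r$ is the rank of the observability matrix $\mathcal O=(C;CA;\dots;CA^{n-1})$. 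Since this matrix has rational entries in $\bm\alpha$, I would take its generic rank.

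The coefficients $c_i$ come from Cramer's rule applied to the linear system $CA^{r}\bm x=\sum_i c_i\,CA^{i}\bm x$. This expresses them as ratios of determinants of matrices whose entries are entries of powers of $A$. This is exactly where the ``unexpanded/determinantal'' presentation enters. Each $c_i$ is a composition: a determinant polynomial applied to entries of $CA^i$, which are in turn iterated products of the entries of $A$. Writing it in this nested form keeps the size polynomial, whereas expanding could be exponential, cf. Example~\ref{ex:full}. To clear denominators I would replace the rational map by the polynomial map $(c_0\Delta,\dots,c_{r-1}\Delta,\Delta)$, where $\Delta$ is the common denominator, and restrict $D$ to $\Delta\neq 0$. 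This restriction is still polynomial-time definable.

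Next I would show that the map $\bm\alpha\mapsto(c_0,\dots,c_{r-1})$ is injective on $\bm D$ iff $\bm\alpha$ is identifiable on $\bm D$. The argument for $\Sigma_{\bm g}$ in Lemma~\ref{thm:gid_hard} carries over essentially verbatim. The Wronskian of $y,\dots,y^{(r-1)}$ is not in $I_\Sigma\cap\mathbb F(\bm\alpha)\{y\}=[y^{(r)}-\sum c_iy^{(i)}]$ (\cite[Lemma~1]{OPT19} and \cite[Proposition~1]{OPT19}). Hence each $c_i\in\mathbb F\langle y^\ast\rangle$, and the field $k$ of Definition~\ref{item:eight} is $\mathbb F(\bm c)$. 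The automorphism/primality argument then shows that $\mathbb F\langle y^\ast\rangle^{p^{-\infty}}\cap\mathbb F(\bm\alpha)$ lies in the perfect closure of $\mathbb F(\bm c)$. Both directions of the equivalence follow from this, exactly as in that proof. The same identification $k=\mathbb F(\bm c)$ handles $L_\mathrm{gioid}$ simultaneously.

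Two points need care. First, the generic rank $r$ must be computable in polynomial time, and I would handle this in one of two ways. One option is to evaluate $\mathcal O$ at random points (Schwartz--Zippel); the other is to sidestep the issue by emitting the determinant description with the rank condition folded into $D$. Second, and this is the step I expect to be the main obstacle, the $c_i$ are only rational functions in $\bm\alpha$, so injectivity must be checked on the right domain. Points where $\Delta$ vanishes, or where the specialized rank drops, have to be excluded from $D$, since identifiability is a generic notion. Over finite fields, the perfect-closure $p$-th root maps must also be accounted for; I would argue these are injective and so do not affect the equivalence.
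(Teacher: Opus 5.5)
Your construction is single-output from the first line, but only the identifiability half of the statement is restricted to $m=1$; $L_\mathrm{gioid}$ ranges over all $m$. So ``the same identification $k=\mathbb F(\bm c)$ handles $L_\mathrm{gioid}$ simultaneously'' only gives the reduction for single-output systems. It also cannot be extended by running your construction on each $y_i$ separately. For $x'=ax$, $y_1=x$, $y_2=bx$, each output satisfies $y_i'=ay_i$, yet the coupled IO-equation $y_2=by_1$ makes $b$ IO-identifiable. The paper treats all outputs at once. It prolongs $\Sigma$, eliminates all $\bm x^{(j)}$ by unexpanded Gaussian elimination with respect to a block orderly ranking, and keeps for each $y_i$ the monic equation of minimal order, which may involve the other outputs. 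It then uses the proof of \cite[Proposition~1]{OPT19} to get $k=\mathbb F(\bm\beta_1,\dots,\bm\beta_m)$. The hypothesis $m=1$ enters only afterwards, via \cite[Theorem~1]{OPT19}, to equate identifiability with IO-identifiability; your automorphism argument reproves exactly this part. To cover $L_\mathrm{gioid}$ along your lines, you would have to select the rows $C_jA^{l}$ of the observability matrix following an orderly ranking, and express each first dependent row $C_iA^{\nu_i}$ through the previously selected ones.

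Two steps of your $m=1$ argument also fail as written. First, passing to $(c_0\Delta,\dots,c_{r-1}\Delta,\Delta)$ on $\{\Delta\neq0\}$ tests injectivity of $(\bm c,\Delta)$, not of $\bm c$. The Cramer denominator $\Delta$ (a minor of $\mathcal O$) is in general not in $\mathbb F(\bm c)$. For $x'=ax$, $y=bx$ you get $c_0=ab/b$ and $\Delta=b$, and $(ab,b)$ is injective on $\{b\neq0\}$ although $b$ is not identifiable. Keep the rational map $\bm c$ itself, as the paper's $\psi$ does. Alternatively, add a variable $t$ and use $(\bm\alpha,t)\mapsto(tN_0(\bm\alpha),\dots,tN_{r-1}(\bm\alpha))$ on $\{t\Delta(\bm\alpha)=1\}$, where $c_i=N_i/\Delta$. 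Second, the automorphism/primality argument yields only $\mathbb F\langle y^\ast\rangle^{p^{-\infty}}\cap\mathbb F(\bm\alpha)\subseteq\mathbb F(\bm c)^{p^{-\infty}}$, hence only ``identifiable $\Rightarrow$ $\bm c$ injective''. The converse does not follow, because Definitions~\ref{item:eight} and~\ref{def:id} demand, for each $i$ separately, a univariate $h\in\mathbb F(\alpha_i)$. For example, system~\eqref{eq:Sigma} with $g_1=a^3$, $g_2=b^3+a$ gives $\bm c=(a^3,b^3+a)$, which is injective on $\mathbb R^2$. Yet $\mathbb R(a^3,b^3+a)\cap\mathbb R(b)=\mathbb R$, so $b$ is not identifiable on $\mathbb R$. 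The paper's proof asserts this direction in one line too (``In particular, $\psi$ is not injective''), so you inherit this gap rather than create it; still, ``both directions follow exactly as in that proof'' does not close it. Finally, a Schwartz--Zippel rank test gives a randomized reduction, not a Karp reduction. The paper's symbolic elimination faces the same zero-testing issue and leaves it implicit.
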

\begin{proof}
Given a linear system $\Sigma$, augment it, in polynomial time 
with the derivatives of all of the $\bm{x}$-equations up to order $n-1$ and of the $\bm{y}$-equations up to order $n$. The new system $\Sigma^{(n)}$ contains 
$n(n+m)+m$
equations. Again in polynomial time, 
we now eliminate the $\bm{x},\ldots,\bm{x}^{(n)}$-variables in system $\Sigma^{(n)}$ 
using unexpanded Gaussian Elimination to perform implicitization as described in \cite{Cox1997IdealsVA}. For the elimination, we order our main variables as follows:
\[\bm{x}^{(n)},\ldots,\bm{x},\bm{y}^{(n)},\ldots,\bm{y}.\]
The total number of variables to be eliminated is $n(n+1)$, which is stricly less than the number of equation $n(n+m)+m$ for all $m\geqslant 1$. In the language of differential algebra, we perform this elimination using a block orderly-elimination ranking on $\bm{x},\bm{y}$.

In the elimination, if the original system $\Sigma$ had coefficients $f_{i,j}({\bm \alpha}),g_{i,j}({\bm \alpha})$ for the $x_i$, then we treat the $f_{i,j},g_{i,j}$ as variables in the symbolic Gaussian Elimination so that we are left with equations of the form
\begin{equation*}
    E_{\ell,k} := h_{\ell,k}(f_{i,j},g_{i,j}) y^{(n_{\ell,k})}_{\ell}+\ldots.
\end{equation*}
It is straightforward to verify that this unexpanded implicitization will be polynomial time, since we leave the $h_{\ell,i,j}$ unexpanded and therefore each $h_{\ell,i,j}$ is a linear-size rational expression in the $f_{i,j},g_{i,j}$, which takes at most $O(n^3)$ steps to compute for each $y_{\ell}$ (see the discussion in Example~\ref{ex:full} for an explanation as to why this is necessary).

To obtain IO-equations $\bm{E} = E_1',\ldots,E_m'$, we select, for each $i$, $1\leqslant i\leqslant m$, the equation in $\{E_{i,1},...,E_{i,n_i}\}$ whose leading variable is a derivative of $y_i$ and the order of this derivative is the smallest among all equation with leading variable being a derivative of $y_i$. This can again be done in polynomial time. Notice that the leading coefficients in $\bm{E}$ can all be made equal to $1$ in polynomial time too. Because the unexpanded Gaussian elimination is equivalent to computing a characteristic presentation with respect to a block orderly ranking, the proof of \cite[Proposition~1]{OPT19} implies that these selected minimal-order equations are exactly the  IO-equations for the system.

For each $i$, $1\leqslant i \leqslant m$, let $\bm{g}_i$ denote the tuple of coefficients of $E_i'$.
Let \[\psi: \bm\alpha \mapsto (\bm{g}_1(\bm\alpha),\ldots,\bm{g}_m(\bm\alpha)).\] Denote $\bm{\beta}_i = \bm{g}_i(\bm\alpha)$, $1\leqslant i \leqslant m$. By the proof of \cite[Proposition~1]{OPT19}, 
\begin{equation}\label{eq:Rbetaisk}
\mathbb{F}(\bm{\beta}_1,\ldots,\bm{\beta}_m)=k,\end{equation}
where the field $k$ for $\Sigma$ is as in Definition~\ref{item:eight}.  Suppose $\bm\alpha$ is not IO-identifiable over $\bm D$, and let $i$ be such that $\alpha_i$ is not IO-identifiable. By definition, for all $h \in k\cap\mathbb{F}(\alpha_i)$, the map $h:D_i\to\mathbb{F}$ is not injective. In particular, $\psi$ is not injective on $\bm D$. On the other hand, if $\psi$ is not injective on $\bm D$, then, by~\eqref{eq:Rbetaisk}, there exists $i$ such that, for all $h\in k\cap\mathbb{F}(\alpha_i)$, the map $h:D_i\to\mathbb{F}$ is not injective. Therefore, $\alpha_i$ and so $\bm\alpha$ are not IO-identifiable on $D_i$ and $\bm D$, respectively.

Thus, the parameters $\bm{\alpha}$ are IO-identifiable on $\bm D$ if and only if $\psi$ is injective on $\bm D$,  showing 
\begin{equation}\label{eq:GIOID}
L_\mathrm{gioid}  \leq_p L_\mathrm{inj\_det}.
\end{equation}

In regards to $L_{\mathrm{gid}}$, let $m=1$, that is, we only have one output. Let  $\Char\mathbb{F} = p$ be arbitrary. Then, by \cite[Theorem~1]{OPT19} with $\mathbb{C}$ replaced by $\mathbb{F}$ everywhere there (following its proof as in the proof of Lemma~\ref{thm:gid_hard}), in the notation of~\eqref{eq:sys} and Definition~\ref{def:id}, the field of IO-identifiable functions is equal to
\begin{equation}\label{eq:IOandnotIO1}
\mathbb{F}\langle y^\ast\rangle^{p^{-\infty}}\cap \mathbb{F}(\bm\alpha).\end{equation}
Comparing Definition~\ref{def:id} and Definition~\ref{item:eight}, we conclude using~\eqref{eq:IOandnotIO1} that $\bm\alpha$ is identifiable on $\bm D$ if and only if $\bm\alpha$ is IO-identifiable on $\bm D$.
Thus,  we obtain $L_{\mathrm{gid},1}$ is equivalent to $L_{\mathrm{gioid},1}$. Therefore, by~\eqref{eq:GIOID}, we obtain $L_{\mathrm{gid},1}  \leq_p L_\mathrm{inj\_det}$. 
\end{proof}

\begin{example}\label{ex:full} We will see that a straightforward way of reducing IO-identifiability to the injectivity of the coefficient map of IO-equations (by doing Gaussian elimination of the state variables to find the IO-equations) is not necessarily polynomial. 
For this, consider the following (dense, generic) linear ODE model:
\begin{equation}\label{eq:full}
  \begin{cases}
        \dot{x}_{1} = a_{11}x_1+\ldots+a_{1n}x_n  \\ 
        \  \quad\vdots\\
       \dot{x}_n = a_{n1}x_1+\ldots+a_{nn}x_n \\
       y = x_1  \\
  \end{cases}  
\end{equation}
A calculation shows (see the proof of \cite[Theorem~2]{MSE}) that the coefficient of $y$ in the IO-equation for~\eqref{eq:full} is the determinant of the matrix \[
A = \begin{pmatrix}
a_{11}&\ldots& a_{1n}\\
\vdots&\ddots&\vdots\\
a_{n1}&\ldots& a_{nn}
\end{pmatrix},
\]
which if expanded has $n!$ terms, and so the corresponding injectivity problem has input of $n!$ size. Therefore, since, in the input of the injectivity problem, the polynomials are expanded,  this particular way of reducing the IO-identifiability problem to the injectivity problem is not polynomial in the input size (so, we cannot claim that  $L_\mathrm{gioid}  \leq_p L_\mathrm{inj}$). 
\end{example}
\begin{remark}
In regards to $L_{\mathrm{gid}}$, we would like to add the following. Let $m=1$ and $\Char F = 0$, that is, we only have one output. Then, by \cite[Theorem~1]{OPT19} with $\mathbb{C}$ replaced by $\mathbb{F}$ everywhere there, in the notation of~\eqref{eq:sys} and Definition~\ref{def:id}, the field of IO-identifiable functions is equal to
\begin{equation}\label{eq:IOandnotIO}
\mathbb{F}\langle y^\ast\rangle\cap \mathbb{F}(\bm\alpha).\end{equation}
Comparing Definition~\ref{def:id} and Definition~\ref{item:eight}, we conclude using~\eqref{eq:IOandnotIO} that $\bm\alpha$ is identifiable on $\bm D$ if and only if $\bm\alpha$ is IO-identifiable on $\bm D$.
Thus,  we obtain $L_{\mathrm{gid},1}$ is equivalent to $L_{\mathrm{gioid},1}$. If $\mathbb{F}$ is a finite field, then, following the proof of Lemma~\ref{thm:gid_hard}, we can show that the field of globally identifiable functions~\eqref{eq:IOandnotIO} is contained in the perfect closure of the field of IO-identifiable functions. 
\end{remark}

\begin{lemma}\label{lem:not_too_hard}
    For any ring $R$, we have 
    \begin{equation*}
        L_\mathrm{inj\_unexp} \in \mathbf{co\text{-}NP}_R.
    \end{equation*}
\end{lemma}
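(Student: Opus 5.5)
We show that the complement of $L_\mathrm{inj\_unexp}$ lies in $\mathbf{NP}_R$. Non-injectivity is witnessed by a collision. An input $(\bm g, D)$ is rejected exactly when one of two things happens. Either the input is syntactically malformed, which can be checked deterministically. Or there exist $\bm a, \bm b \in D$ with $\bm a \neq \bm b$ and $\bm g^{\circ}(\bm a) = \bm g^{\circ}(\bm b)$. So the certificate we guess is the pair $(\bm a,\bm b)\in R^{s}\times R^{s}$, where $s$ is the number of parameters. Its size is linear in the input, which fits the BSS model, where ring elements cost unit size.

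The verifier performs four checks, each in polynomial time.
\begin{itemize}
\item It checks $\bm a\neq \bm b$ by testing coordinates for equality. The BSS model allows equality tests on ring elements.
\item It checks $\bm a\in D_i$ and $\bm b\in D_i$ coordinatewise. Since $D\in L_\text{poly\_def}$ (Definition~\ref{def:poly_def}), there is a polynomial-time machine $M$ with $x\in D \iff M(x,D)=1$. We run $M$ on each point together with the code $|D|$.
\item It evaluates $\bm g^{\circ}(\bm a)$ and $\bm g^{\circ}(\bm b)$.
\item It compares the two resulting vectors.
\end{itemize}

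The key step is the evaluation. We never expand the composition. Instead we evaluate recursively following the definition of $L_\mathrm{inj\_unexp}^{(n)}$. To compute $g_i=g_{i,0}(g_{i,1},\dots,g_{i,\ell_i})$ at a point, we first evaluate each inner $g_{i,j}$. Each of these is itself an instance of depth one less, so we proceed by induction on the depth, with the base case $L_\mathrm{inj\_unexp}^{(0)}=L_\mathrm{inj}$ consisting of expanded polynomials. We then plug the resulting values into the expanded polynomial $g_{i,0}$.

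We bound the cost by a straight-line-program argument. Every node of the input representation is evaluated once, and at each node the work is polynomial in the size of that node's description. So the total number of arithmetic operations is polynomial in the input length. We memoize shared subexpressions, so a DAG-style sharing in the encoding does not cause a blow-up.

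When the $g$'s are rational rather than polynomial, as in the output of Lemma~\ref{lem:upper}, we handle denominators in one of two ways. We can carry numerator/denominator pairs and compare $\bm g^{\circ}(\bm a)=\bm g^{\circ}(\bm b)$ via cross-multiplication. Alternatively, we require in the certificate that the denominators do not vanish and test this explicitly. Both options add only constant overhead per node.

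We expect the main obstacle to be the bookkeeping for the evaluation bound, together with the uniformity of the membership test for $D$. The evaluation bound is precisely the point of Example~\ref{ex:full}: the expanded form has $n!$ terms, but the unexpanded evaluation costs $O(n^3)$. The induction on depth must therefore use a size measure of the encoding that is additive over nodes, so that the polynomial does not compound with depth. If the encoding is a tree, we define size as the total number of symbols. If it is a DAG, we evaluate in topological order. Either way, the cost is bounded by $\mathrm{poly}(\text{input length})$, independent of depth.

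Finally, we remark that $L_\mathrm{inj\_det}\in\mathbf{co\text{-}NP}_R$ follows as the depth-one case of this argument.
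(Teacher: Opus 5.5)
Your proposal is correct and follows the same route as the paper. The certificate is a collision pair $(\bm p,\bm q)$ with $\bm p\neq\bm q$ and $\bm g^{\circ}(\bm p)=\bm g^{\circ}(\bm q)$, of size $O(s)$, and the verifier evaluates the unexpanded composition recursively by induction on depth. Your additions make explicit several details the paper's proof leaves implicit: running the membership machine for $D$, charging cost additively over the nodes of the encoding so the bound does not compound with depth (valid because BSS arithmetic has unit cost), and handling denominators.
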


\begin{proof}
We prove the existence of a certificate and a polynomial time verification machine for 
$(g,D)$, if it is not injective. 
If $g$ is not injective on $D$, then there are points $\bm p, \bm q \in D$ such that $\bm p \neq \bm q$ and $g(\bm p) = g(\bm q)$. We take as certificate the points $\bm p, \bm q \in D$ and as verifier a machine that computes the boolean function $g(\bm p) = = g(\bm q)$ 
We now analyze the complexity: 
\begin{itemize}
    \item (\textbf{Polynomial Certificate Size}) $\mathrm{length}(\bm p),\mathrm{length}(\bm q)$ are both polynomial size in terms of the input size $|(g,D)|$; in particular, if $g_{i,j} \in \mathbb{F}[\alpha_1,...,\alpha_s] $, then $\bm p  , \bm q \in \mathbb{F}^s$. Therefore, since elements of the alphabet $\mathbb{F}$ are considered to have unit cost, we have that  $\mathrm{length}(\bm p),\mathrm{length}(\bm q) = O(s) = O(|(g,D)|)$ as needed. 
    
    \item (\textbf{Polynomial Verifier}) Verifying whether $g(\bm p) = g(\bm q)$ can be done in polynomial time in terms of the input size $|(g,D)|$. 
    We perform an induction on the smallest $n$ for which $(g,D) \in L_\mathrm{inj\_unexp}^{(n)} $.
    The base case where $(g,D) \in L_\mathrm{inj\_unexp}^{(1)}$ 
    is simply evaluating a polynomial, which is straightforward to see can be done in ploynomial time (the number of aritmetic operations is directly proportionate to $\mathrm{length}(\bm p)$). 
Notice that if $(g,D) \in L_\mathrm{inj\_unexp}^{(n)} $, then evaluating $(h\circ f)({\bm p}) := g({\bm p})$ for some ${\bm p} \in D, (f, D) \in L_\mathrm{inj\_unexp}^{(n-1)}$, and $(h,\mathrm{range}(f)) \in L_\mathrm{inj\_unexp}^{(1)}$ is the same thing as evaluating $f({\bm p})$ and then $h(f({\bm p}))$. 
The induction hypotheseis gives us that computing $f({\bm p})$ can be done in polynomial time and $h(f( {\bm p}))$ is simply the base case above. 
Finally after computing both $g({\bm p})$ and $g({\bm q})$ checking whether they are equal has unit cost. 
    
\end{itemize}
\end{proof}

\begin{corollary}\label{cor:final} Let $R$ be a ring in the sense of the BSS model (Definition~\ref{def:5}). Then
\begin{equation}\label{eq:GIOID_conp}
L_\mathrm{gioid}   \in \mathbf{co\text{-}NP}_R \quad\text{and}\quad L_\mathrm{gid,1} \in  \mathbf{co\text{-}NP}_R .
\end{equation}

\end{corollary}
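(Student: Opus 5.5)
The corollary should follow by composing the reductions of Lemma~\ref{lem:upper} with the membership result of Lemma~\ref{lem:not_too_hard}. The one point needing care is that Lemma~\ref{lem:upper} is stated with target $L_\mathrm{inj\_det}$ (expanded form), while the reduction built in its proof actually outputs the coefficient map $\psi$ of the IO-equations in \emph{unexpanded} form. As Example~\ref{ex:full} shows, these are symbolic Gaussian-elimination expressions left uncomposed. So I would first record that the map $\Sigma\mapsto(\psi,\bm D)$ of that proof is a polynomial-time reduction into $L_\mathrm{inj\_unexp}$. Each coordinate of $\psi$ is presented as a composition of polynomial (after clearing denominators, or allowing rational) maps of polynomial total size and polynomial nesting depth, namely the $O(n^3)$ elimination steps. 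The domain $\bm D$ is passed through unchanged, so it remains in $L_\text{poly\_def}$.

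Next I would use the standard closure fact that $\mathbf{co\text{-}NP}_R$ is closed under polynomial-time Karp reductions in the BSS model. Suppose $A\le_p B$ via $f$ and $B\in\mathbf{co\text{-}NP}_R$ with certificate verifier $M$. Then for $x\notin A$ we have $f(x)\notin B$, so a certificate $u$ for $f(x)$ has length polynomial in $|f(x)|$, hence polynomial in $|x|$. The verifier $M(f(x),u)$ then runs in polynomial time. Applying this with $A=L_\mathrm{gioid}$ and $B=L_\mathrm{inj\_unexp}$ gives the first membership. For the second, I would use the equivalence $L_{\mathrm{gid},1}=L_{\mathrm{gioid},1}$ established in the proof of Lemma~\ref{lem:upper}. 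Since $L_{\mathrm{gioid},1}$ is a restriction of $L_\mathrm{gioid}$ to a polynomial-time decidable syntactic condition ($m=1$), it inherits the bound.

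Concretely, the certificate of non-identifiability is a pair $\bm p\neq\bm q$ in $\bm D$ with $\psi(\bm p)=\psi(\bm q)$. The verifier recomputes the reduction, checks $\bm p,\bm q\in\bm D$ using the polynomial-time membership machine from Definition~\ref{def:poly_def}, and evaluates the unexpanded coefficients layer by layer as in the proof of Lemma~\ref{lem:not_too_hard}.

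The main obstacle is justifying that the unexpanded representation has polynomial size and that evaluation never divides by zero. The elimination involves rational expressions, and a witness point could make a pivot vanish. I would handle this in one of two ways. The first option is to normalize by multiplying through by the product of pivots, so that the IO-coefficients become polynomials given as unexpanded products and sums of determinants-in-progress. The second option is to have the certificate include the pivot order, or simply to restrict to points where the denominators are nonzero. The latter is consistent with the rational-map convention already used in Definitions~\ref{item:eight} and~\ref{def:id}, where injectivity is meant on the domain of definition. Once this is settled, the remaining steps are routine.
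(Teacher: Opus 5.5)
Your proposal is correct and is essentially the paper's argument: compose the reduction of Lemma~\ref{lem:upper} with Lemma~\ref{lem:not_too_hard}, and use that $\mathbf{co\text{-}NP}_R$ is closed under polynomial-time Karp reductions; targeting $L_\mathrm{inj\_unexp}$ directly, rather than the expanded $L_\mathrm{inj\_det}$, is if anything more faithful to the unexpanded elimination that proof actually performs. One caution: of your two fixes for vanishing denominators only the second is sound, because clearing denominators replaces the monic IO-coefficient map $\psi$ by a different map (for $x'=\alpha_1x$, $y=\alpha_2x$ the monic IO-equation is $y'-\alpha_1y$, yet the rescaled $\alpha_2y'-\alpha_1\alpha_2y$ has a coefficient map that is injective on $\alpha_2\neq0$ although $\alpha_2$ is not IO-identifiable), so you should instead keep $\psi$ rational and test $\psi(\bm p)=\psi(\bm q)$ at points of definition, e.g.\ by cross-multiplying.
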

\begin{proof}
By Lemma~\ref{lem:upper}, both $L_{\mathrm{gioid}}$ and $L_\mathrm{gid,1}$ are polynomial-time Karp reducible to $L_\mathrm{inj\_det}$, which is the depth-one case of $L_\mathrm{inj\_unexp}$. By Lemma~\ref{lem:not_too_hard}, $L_\mathrm{inj\_unexp}\in \mathbf{co\text{-}NP}_R$. Since $\mathbf{co\text{-}NP}_R$ is closed under polynomial-time Karp reductions, the result follows. 
\end{proof}

\bibliographystyle{abbrvnat}
\bibliography{bib.bib}

\end{document}